\begin{document}

\title{Minimizing Travel in the Uniform Dispersal Problem for Robotic Sensors}


\author{Michael Amir}
\email{ammicha3@cs.technion.ac.il}
\affiliation{%
  \institution{Technion - Israel Institute of Technology}
  \city{Haifa} 
  \state{Israel} 
}

\author{Alfred M. Bruckstein}
\email{freddy@cs.technion.ac.il}
\affiliation{%
  \institution{Technion - Israel Institute of Technology}
  \city{Haifa} 
  \state{Israel}
}

\begin{abstract}  
The limited energy capacity of individual robotic agents in a swarm often limits the possible cooperative tasks they can perform. In this work, we investigate the problem of covering an unknown connected grid environment (e.g. a maze or connected corridors) with a robotic swarm so as to minimize the maximal number of steps that each member of the swarm makes and their activity time before their work is finished, thereby minimizing the energy requirements. The robots are autonomous, anonymous and identical, with local sensors and finite memory, and possess no communication capabilities. They are assumed to disperse over time from a fixed location, and to move synchronously. The robots are tasked with occupying every cell of the environment, while avoiding collisions.

In the literature such topics are known as \textit{uniform dispersal problems}. The goal of minimizing the number of steps traveled by the robots has previously been studied in this context. Our contribution is a local robotic strategy for simply connected grid environments that, by exploiting their topology, achieves optimal makespan (the amount of time it takes to cover the environment) and minimizes the maximal number of steps taken by the individual robots before their deactivation. The robots succeed in discovering optimal paths to their eventual destinations, and finish the covering process in $2V-1$ time steps, where $V$ is the number of cells in the environment. 
\end{abstract}

%

\keywords{Mobile robot; Minimizing movement; Unknown environment; Uniform dispersal; Grid environment; Area coverage} 

\maketitle


\setlength{\belowcaptionskip}{-11pt}
\section{Introduction and related work}

The objective of swarm robotics is to enable a large group of simple and autonomous mobile robots to work cooperatively towards complex goals. It is often the case, e.g. when the robots are traveling large distances or are airborne, that a lot of energy is required for the sustained activity of robots in the swarm. In this work, we are interested in solving the \textit{uniform dispersal} problem for simply connected grid environments while minimizing the movement and active time of each individual robot, in order to minimize the energy requirements.

In many real life scenarios, e.g. mapping or hazard detection, one is interested in deploying agents over an unknown area and covering it for the purposes of sensing or reacting \cite{howard2002}. The use of swarm robotics to solve such problems has many inherent advantages, such as scalability, greater coverage, and autonomy in mission execution. In \textit{uniform dispersal}, a large number of mobile robots emerge over time from a source or several source locations (called ``doors'' in the literature), and are tasked to completely cover an unknown environment $R$ by occupying every location and to terminate their work in finite time \cite{barraswarm1}. The robots must not collide (i.e. two robots must never occupy the same location), nor step outside the boundaries of the environment. 

Hsiang et al. \cite{hsiang} \cite{hsiang2003geom} introduced the problem of uniform dispersal in discrete planar domains by mobile robots endowed only with finite memory, local sensors, and local communication. Their DFS-esque ``follow the leader'' strategy enables robots to cover the environment in optimal time, assuming a synchronous time scheme. Much follow-up work has focused on achieving dispersal with weaker models of robots, e.g. disallowing communication, reducing memory, or assuming asynchronous time \cite{barraswarm1} \cite{hideg2017uniformtime} \cite{barrameda2013uniform}. Barrameda et al. \cite{barraswarm1} have shown that the dispersal problem is intractable under the usual assumptions if the robots are assumed to be oblivious (that is, to possess no persistent states), though there have been attempts to get around this limitation using randomization \cite{hideg2017uniform}. It is standard to assume that the robots are moving in a connected grid environment, as any 2D space can be approximated well by pixelation into tiny grid cells of uniform size.

From a theoretical perspective, the problem of dispersing and coordinating mobile robotic agents while minimizing movement or energy has been studied extensively both as a centralized motion planning problem and in distributed sensor networks \cite{demaine2009minimizing} \cite{minimizing2} \cite{minimizing3}, and various computational hardness results have been proven in the case of general graph environments \cite{minimizing1}. More broadly, multi-agent scheduling problems have been studied in the presence of energy constraints \cite{hartuv2018scheduling}. Specifically in the context of uniform dispersal for robotic sensors, the question of minimizing travel for orthogonal areas that we here concern ourselves with was discussed in the original paper by Hsiang et. al \cite{hsiang} and soon after in Stainzberg's doctoral dissertation \cite{sztainberg2003algorithms}, and more recently in \cite{hideg2016area} and  \cite{hideg2017uniform}. 

In recent decades there has been considerable effort dedicated to the algorithmic problems of agent coverage or exploration, wherein a robot or team of robots must completely explore, occupy, or map an area. Attention has been given to the case of a single robot tasked with visiting every vertex of a graph or grid environment \cite{batalin2007design} \cite{yehoshua2015frontier}, to single- and multi-robot path planning \cite{agmon2008giving}, to natural or pheromone-based computation models \cite{payton2001pheromone} \cite{wagner1997line} \cite{wagner2000mac}, to related formation or dispersal problems \cite{morlok2007dispersing} \cite{cheng2008coalition}, and to a multitude of other topics. We refer the reader to \cite{galceran2013survey} or \cite{altshuler2018introduction} for recent surveys. The problem of uniform dispersal distinguishes itself from many of these by its distinctly online nature. The robots emerge onto the environment at different times and must successfully embed themselves into the ongoing exploration effort, without colliding with other robots, and without interrupting the constant outflow of new robots. They must do this under stringent computational, sensory, and communication restrictions--in most recent models, the robots, modelled as finite automata, are not allowed to talk to each other, and cannot even tell the difference between environmental obstacles and the presence of robots active in the formation. We find it fairly surprising that under these restrictions, robots are capable of exploring an entirely unknown environment in theoretically optimal time, as well as (we shall see) walk only in shortest paths to their destinations while doing so.

Much attention has been given to the problem of deployment and coverage in GPS-denied environments, as this may enable the deployment of robotic fleets outside laboratory conditions and their utilization in real world scenarios. Dispersal strategies that operate under stringent restrictions on communication and sensing may be especially relevant to future investigations in this domain. Implementation, however, forms a technical barrier, as when looking at the problem of generating robust uniform coverage from a systems perspective the issues of relative visual localization - range, angular coverage and persistence - become important. There has been progress towards overcoming these barriers in a number of different settings. In \cite{biswas2012depth} the authors discuss a visual relative localization method suited for autonomous navigation and obstacle avoidance in indoor environments, for mobile robots with limited computational power. In \cite{saska2017system} the authors present a visual localization method based on an image processing algorithm suitable for use on small quadcopters. The algorithm assumes all the quadcopters have identical but specific markings that ease the localization. These are but examples of the sensors an agent might use when implementing strategies that operate under such restrictions.

\textbf{Our contribution:} Working in a synchronous time setting, Hsiang et al. \cite{hsiang} pose the problem of minimizing the total and individual number of steps the robots take (the ``total travel'' and ``individual travel''), while achieving optimal makespan--the time before complete coverage of the environment. They describe several algorithms for \textit{general} grid environments that consecutively improve on each other in this respect, but these algorithms do not achieve a global optimum. 

We describe a local uniform dispersal strategy that, for \textit{simply connected} grid environments, achieves optimal makespan and minimizes the total travel and maximal individual travel. The strategy's goal is to enable a robot to settle in place as soon as possible, thereby minimizing the energy consumption. It exploits the ability to decompose simply connected environments into a tree of simply connected sub-environments via  ``halls''--defined as corners of the environment that also have an obstacle located diagonally opposite to them. We work in a setting similar to \cite{hsiang}, where time is synchronous and robots have local sensors and finite memory. Specifically, the robots require $5$ bits of persistent memory ($2^5$ persistent states), and a visibility span of Manhattan distance 2.  As is sometimes assumed, e.g. in \cite{hideg2017uniformtime}, they are initialized with a common notion of up, down, left and right. Unlike \cite{hsiang}, our algorithm works without assuming any inter-robot communication capabilities: the robots are only capable of seeing environmental obstacles (including other robots that block them), and are unable to distinguish between kinds of obstacles.

By attempting to restrict their movement to as few directions as possible, our strategy enables the robots to travel in shortest paths from their arrival point to their eventual, a-priori unknown, settling point. The robots finish dispersing in $2V-1$ time steps, where $V$ is the number of cells in the environment. 

We show further that no local strategy can minimize total travel in the general case, i.e. for general grid environments.

\section{Model}
Consider the integer grid  $\mathbb{Z}^2 = \mathbb{Z} \times \mathbb{Z}$, whose vertices are points $(x,y)$ where $x$ and $y$ are both integers, and $(x_1, y_1)$ is connected to $(x_2, y_2)$ if and only if the Manhattan distance $|x_1 - x_2| + |y_1 - y_2|$ is exactly $1$. A grid environment or region $R$ is defined as a connected sub-graph of $\mathbb{Z}^2$. The \textit{complement} of $R$, denoted $R^c$, is defined as the sub-graph $\mathbb{Z}^2 - R$ of $\mathbb{Z}^2$. We call the vertices of $R^c$ \textit{walls}. 

\begin{definition}
\label{holelessdefinition}
A region $R$ is said to be  \textbf{simply connected} if and only if any path $v_1 v_2 \ldots v_1$ of vertices in $R$ that forms a closed curve does not surround any vertices of $R^c$.
\end{definition}

In particular, a region $R$ is simply connected if $R^c$ is connected.

A robot is a mobile point in $R$ with limited vision and small finite memory. No two robots may occupy the same location. The visibility range of all robots is assumed to be $2$, meaning at every time step, a robot is aware of unoccupied vertices in $R$ that are at a Manhattan distance of 2 or less from it. It infers from this the positions of local \textit{obstacles} (walls or other robots), but cannot distinguish between types of obstacles. All robots have a shared notion of up, down, left and right upon emergence from $s$. 

Time is discretized to steps of $t = 1, 2, \ldots$. At every time step, all robots perform a Look-Compute-Move operation sequence, in which they examine their environment and move to a new location based on a computation they perform (a robot may also choose to stay in place - this counts as a move). This occurs synchronously, meaning that all robots move to their computed next location at the same time. The ``beginning'' of a time step refers to the configuration of the robots at that time step before the robots move. The ``end'' of a time step is the configuration at that time step after the robots move.

We denote by $prev(A)$ the position of a robot $A$ at the beginning of the previous time step, and by $next(A)$ its position at the beginning of the next time step.

A given robot is either \textit{active} or \textit{settled}. All robots are initially active, and eventually become settled at the end of some time step. Settled robots never move from their current position.

A unique vertex $s$ in $R$ is designated as the source or ``door'' vertex. If at the beginning of a time step there is no mobile robot at $s$, a new robot emerges at $s$ at the end of that time step. 

\textit{Energy and total travel.} The ``travel'' $T_i$ of the $i$th robot is the number of time steps $t$ that begin and end with the robot still active. This definition includes steps where the robot does not change location, since we wish to relate travel to energy expenditure (e.g., a quadcopter floating or circling in place is still traveling, and consumes just as much energy). The total travel of the robots is then the sum $\sum T_i$ over all robots, and can be seen as the total amount of energy the robots consume before they settle.

\section{Find-Corner Depth-First Search}

We describe a local rule, ``Find-Corner Depth-First Search'' (Algorithm \ref{alg:FCDFS}), that enables the robots to disperse over a simply-connected region $R$. As in  \cite{hsiang}, the algorithm has a makespan of $2V-1$ (where $V$ is the number of cells in $R$, or equivalently, the total area of $R$ when setting every cell to be a unit square). We note that since at best, robots arrive at $s$ once per two time steps, this is the lowest possible makespan.

The purpose of FCDFS is to minimize the  individual travel and total travel of the robots. It does this by ensuring that the path of a robot from $s$ to its eventual destination (the vertex at which it settles) is a shortest path in $R$. 

The idea of the algorithm lies in the distinction between a \textit{corner} and a \textit{hall} (see Figure \ref{fig:corners} and Figure \ref{fig:halls}):

\begin{definition}
A vertex $v$ of a grid environment $R$ is called a \textbf{corner} if either:
 
\begin{enumerate}[label=(\alph*)]
    \item $v$ has one or zero neighbours in $R$, or
    \item $v$ has precisely two neighbours $u$ and $u'$ in $R$, and $u$ and $u'$ have a common neighbour $w$ that is distinct from $v$.
\end{enumerate}
\end{definition}

\begin{definition}
A vertex $v$ of $R$ is called a \textbf{hall} if it has precisely two neighbours $u$ and $u'$, and $u$ and $u'$ are both adjacent to the same vertex $w$ in $R^c$.
\end{definition}

\begin{figure}[!htb]
  \centering%
    \includegraphics[width=.2\linewidth,scale=0.3]{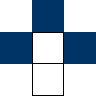}\hfill%
    \includegraphics[width=.2\linewidth,scale=0.3]{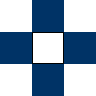}\hfill%
    \includegraphics[width=.2\linewidth,scale=0.3]{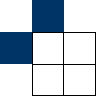}\hfill%
    \caption{Corners. (Blue vertices are \textit{walls}; vertices in $R^c$).}
    \label{fig:corners}
\end{figure}

\begin{figure}[!htb]
  \centering%
    \includegraphics[width=.2\linewidth]{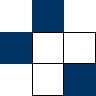}\hfill%
    \caption{A hall.}
    \label{fig:halls}
\end{figure}

Essentially, halls are vertices in $R$ that are blocked by walls on two sides, and have an additional wall $w$ diagonal to them. Corners are either dead-ends, or vertices in $R$ that are blocked by walls on two sides, and have a vertex $w$ of $R$ diagonal to them. If $v$ is either a hall or a corner, $w$ is called the ``diagonal'' of $v$, and is denoted $diag(v)$. We observe that diagonals are uniquely specified.

Robots executing FCDFS attempt to move only in `primary' and `secondary' directions, where the secondary direction is always a 90-degree clockwise rotation of the primary direction (for example "up and right", "right and down", or "down and left"). They may only change their primary direction once they arrive at a hall, and they become settled once both their primary and secondary directions are blocked and they are at a corner.

For the rest of this section, let $R(t)$ be the environment $R$ at time $t$, i.e. the initial environment $R$ where we have removed from $R$ every vertex that is occupied by a \textit{settled} robot at the beginning of time step $t$. 

A robot at time $t$ is searching for the corners and halls of $R(t)$. However, robots executing FCDFS are unable to distinguish between active robots, and walls or settled robots. Hence, it is important to design the algorithm so that a robot never misidentifies a corner of $R(t)$ as a hall, or vice-versa, due to an active robot (rather than a wall or a  settled robot) occupying the diagonal and being identified as an obstacle. For this purpose we enable our robots to remember their two previous locations. We will show that an active robot can occupy the diagonal of a  corner at time $t$ if and only if its predecessor occupied this diagonal at time $t-2$, thereby allowing the predecessor to distinguish between 'real' and 'fake' halls.

\begin{algorithm}[!htb]
  \caption{Find-Corner Depth-First Search}
  \begin{algorithmic}
    \State Let $v$ be the current location of $A$.
    \If{every neighbouring vertex of $v$ is occupied}
        \State Settle.
    \ElsIf{$A$ has never moved}\Comment{Initialization}
        \State Search clockwise, starting from the "up" direction, for an unoccupied vertex, and set primary direction to point to that vertex.
    \EndIf
    \If{$A$ can move in its primary direction}
        \State Step in the primary direction.
    \ElsIf{$A$ can step in secondary direction}
        \State Step in the secondary direction.
    \Else\Comment{We are at a corner or a hall.}
    \If{$prev(prev(A)) = diag(v)$ $\lor$ $diag(v)$ is unoccupied} 
        \State Settle.
    \Else\Comment{We think we are at a hall.}
        \State Set primary direction to point to the neighbour of $v$ different from $prev(A)$.
        \State Move in the primary direction.
    \EndIf
    \EndIf
  \end{algorithmic}
  \label{alg:FCDFS}
\end{algorithm}

\subsection{Analysis}

In this section we give an analysis of the FCDFS algorithm. To start, we require some lemmas about corners and halls.

\begin{lemma}
\label{removecornersimplyconnected}
Let $c$ be a corner of a simply connected region $R$. Then:

\begin{enumerate}[label=(\alph*)]
\item $R - c$ is simply connected.
\item For any two vertices $u, v$ in $R - c$, the distance between $u$ and $v$ is the same as in $R$.
\end{enumerate}
\end{lemma}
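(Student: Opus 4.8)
The plan is to handle the two cases of the corner definition uniformly where possible, and to argue both parts (a) and (b) from the same structural picture. First I would set up notation: if $c$ has zero or one neighbours in $R$, both claims are almost immediate --- removing a degree-$\le 1$ vertex from a connected graph leaves it connected, creates no enclosed region, and cannot shorten any path between remaining vertices (no shortest path between two other vertices passes through a leaf or an isolated point). So the substance is the case where $c$ has exactly two neighbours $u, u'$ sharing a common neighbour $w \neq c$ in $R$. Here the four vertices $c, u, w, u'$ sit on a unit square, and $c$ is the ``outer'' corner of that square.

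For part (b), the key observation is that any path through $c$ enters and leaves via $u$ and $u'$ (its only neighbours), so the sub-path $u\,c\,u'$ can be replaced by $u\,w\,u'$, which has the same length $2$ and avoids $c$. Hence every $u$--$v$ walk in $R$ can be rerouted to an equally long walk in $R - c$, so distances do not increase; they obviously do not decrease since $R - c$ is a subgraph. This also shows connectivity of $R-c$ in the two-neighbour case, giving the ``easy half'' of (a): any two vertices of $R-c$ were connected in $R$, and the rerouting keeps them connected after deleting $c$.

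The heart of the matter is the simple-connectivity claim in (a): deleting $c$ must not create a hole, i.e. a closed path in $R - c$ that surrounds some vertex of $(R-c)^c = R^c \cup \{c\}$. I would argue by contradiction: suppose $\gamma$ is a closed path in $R-c$ surrounding a vertex $z \in R^c \cup \{c\}$. If $z \in R^c$, then $\gamma$ already lies in $R$ and surrounds a wall, contradicting that $R$ is simply connected. So we must have $z = c$: the new cycle $\gamma$ surrounds $c$ itself. The task is then to derive a contradiction from the geometry --- intuitively, a closed curve in $R$ that encircles $c$ must pass ``around'' $c$, but $c$'s only neighbours $u$ and $u'$ lie on a common unit square with $w$, and one checks that any lattice cycle avoiding $c$ but enclosing it is forced to also enclose $w$ or one of the wall vertices diagonally adjacent to $c$; tracking which vertices are walls (the two grid neighbours of $c$ other than $u,u'$ are necessarily in $R^c$) shows $\gamma$ must in fact already surround a vertex of $R^c$, again contradicting simple connectivity of $R$. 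Making this ``curve encircling $c$ must encircle a neighbouring wall'' step precise --- presumably via a discrete Jordan-curve / winding-number argument on $\mathbb{Z}^2$, or by a careful local case analysis of the four cells incident to $c$ --- is the main obstacle, and is where I would spend the bulk of the proof; the rest is routine.
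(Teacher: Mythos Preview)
Your proposal is correct and follows essentially the same approach as the paper. The rerouting $u\,c\,u' \mapsto u\,w\,u'$ through $diag(c)$ is exactly the paper's argument for (b) and for connectivity, and your reduction of (a) to ``a cycle in $R-c$ enclosing $c$ must also enclose an adjacent wall of $c$'' is precisely the paper's one-line justification (``as $c$ is adjacent to two walls, no path in $R-c$ can surround it''). The paper simply asserts this step without further comment, whereas you correctly flag it as the place where a discrete Jordan-curve argument is implicitly being invoked; the clean way to phrase it is that a lattice cycle avoiding both $c$ and an \emph{adjacent} wall $w'$ cannot separate them (two orthogonal unit lattice segments meet only at lattice points), so enclosing $c$ forces enclosing $w'\in R^c$. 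Note that it is the two \emph{orthogonally} adjacent walls of $c$ that do the work here, not the diagonal ones you mention first---your parenthetical ``the two grid neighbours of $c$ other than $u,u'$'' is the right object, and you can drop the reference to $w$ and to diagonal walls entirely.
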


\begin{proof}
Removing $c$ does not affect connectedness, nor does it affect the distance from $u$ to $v$, as any path going through $c$ can instead go through $diag(c)$. Further, as $c$ is adjacent to two walls, no path in $R-c$ can surround it, so $R-c$ also remains simply connected.   
\end{proof}

An \textit{articulation point} (also known as a separation or cut vertex) is a vertex of a graph whose deletion increases the number of connected components of the graph (i.e. disconnects the graph) \cite{reinharddiestel2017}.

\begin{lemma}
\label{treestructure}
The halls of a simply connected region are articulation points.
\end{lemma}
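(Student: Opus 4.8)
The plan is to argue by contradiction: if a hall $v$ were not an articulation point I will exhibit a closed curve in $R$ that surrounds a vertex of $R^c$, contradicting Definition~\ref{holelessdefinition}. First pin down the local picture around $v$. Being a hall, $v$ has exactly two neighbours $u,u'$ in $R$, both adjacent to a common wall $w=diag(v)\in R^c$. A one-line check (two \emph{opposite} neighbours of a grid vertex share no common neighbour besides that vertex) forces $u$ and $u'$ to be perpendicular neighbours of $v$, so $v,u,w,u'$ are the four corners of a unit grid square with $v,w$ the diagonal pair; moreover the other two neighbours $a,b$ of $v$ in $\mathbb{Z}^2$ lie in $R^c$, since $v$ has only two neighbours in $R$.

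Now suppose $R-v$ is connected. As $u,u'\in R-v$, choose a simple path $P$ from $u$ to $u'$ inside $R-v$; it avoids $v$, and since $u\not\sim u'$ it has length at least two, so closing it with the edges $uv$ and $vu'$ produces a genuine simple cycle $C$ whose planar realization is a Jordan curve contained in $R$. Let $F_1$ be the unit square with corners $v,u,w,u'$, and let $F_3$ be the unit square on the opposite side of $v$, one of whose corners is the wall $a$. In a sufficiently small disk about $v$ the only part of $C$ present is the arc $u$--$v$--$u'$ made of the two edges $uv$ and $vu'$, both of which bound $F_1$; hence inside that disk $C$ separates the interior of $F_1$ from the interiors of the three other unit squares meeting $v$, in particular from that of $F_3$. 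Since the open interior of a grid square is disjoint from $C$, each of $F_1,F_3$ lies wholly inside or wholly outside $C$, and by the preceding sentence they lie on opposite sides.

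Now finish: exactly one of $F_1,F_3$ lies in the bounded component of $\mathbb{R}^2\setminus C$. If it is $F_1$, then its corner $w$ lies in the closure of that component, and since $w\in R^c$ while $C\subseteq R$ we have $w\notin C$, so $w$ is surrounded by $C$; if it is $F_3$, the same reasoning applied to its corner $a\in R^c$ shows $a$ is surrounded. Either way $C$ is a closed curve in $R$ enclosing a wall, contradicting simple connectedness, so $R-v$ must be disconnected and $v$ is an articulation point. The step needing care is the local separation claim — that near $v$ the curve $C$ places $F_1$ and $F_3$ on opposite sides — which rests on $P$ avoiding $v$ (so $uv$ and $vu'$ are the only edges of $C$ touching $v$) together with the local consistency of a Jordan curve's inside/outside across the arc through $v$; everything else is routine bookkeeping about the square $\{v,u,w,u'\}$.
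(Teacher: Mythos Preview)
Your proof is correct and follows essentially the same approach as the paper: assume the hall is not a cut vertex, build a cycle through it using an alternate $u$--$u'$ path, and derive a contradiction by showing this cycle must enclose a wall vertex. The paper carries this out in the ``cells-as-unit-squares'' embedding and appeals to a figure for the enclosure step, whereas you work in the ``vertices-as-points'' embedding and supply the Jordan-curve details explicitly; the underlying argument is the same.
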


\begin{proof}
Let $h$ be a hall of a simply connected region $R$. Suppose for contradiction that $h$ is not an articulation point, and let $u$ and $u'$ be the neighbours of $h$. Then there is a path from $u$ to $u'$ that does not pass through $h$. Let $P$ be this path, and let $P'$ be the path from $u$ to $u'$ that goes through $h$. 

When embedded in the plane in the usual way, $R$ is in particular a simply connected topological space. The hall $h$ is embedded onto a unit square, whose four corners each touch a wall: three touch the two walls adjacent to $h$, and the fourth touches $diag(h)$. Joined together to form a closed curve, the paths $P$ and $P'$ form a rectilinear polygon that must contain at least one corner of $h$ in its interior. Hence, the curve $PP'$ contains a part of $R^c$--and we get a contradiction to the simply connected assumption. (See Figure \ref{fig:hall_lemma}). 

\begin{figure}[!ht]
  \centering%
    \includegraphics[width=.49\linewidth]{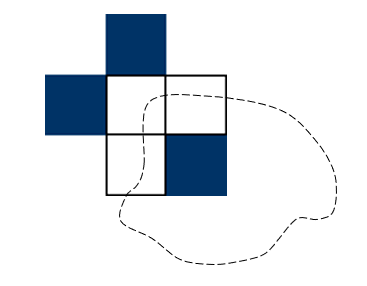}\hfill%
    \includegraphics[width=.49\linewidth]{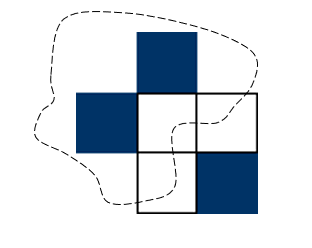}\hfill%
    \caption{The two possibilities for $PP'$.}
    \label{fig:hall_lemma}
\end{figure}

\end{proof}

Lemma \ref{treestructure} indicates that $R$ can be decomposed into a tree structure $T(R)$ as follows: first, delete all halls of $R$ to form separated connected components. Let $C_1, C_2, \ldots, C_n$ be these components, where $C_i$ also includes its adjacent halls. Letting the vertices of $T(R)$ be these components, connect $C_i$ and $C_j$ by an edge if they share a hall. We set $C_1$ to be the root of the tree, and the connected component containing the door vertex $s$.

By Lemma \ref{removecornersimplyconnected}, assuming our robots correctly stop only at corners, $R(t)$ can in the same manner be decomposed into a tree $T(R(t))$ whose connected components are $C_1(t), C_2(t), \ldots$. These components are each a sub-graph of a connected component of $T(R)$. 

Let $A_1, A_2, \ldots$ denote the robots that emerge from $s$ in the order of arrival. In the next several propositions, we make the \textit{no fake halls at time $t$} assumption: this is the assumption that for any $t' < t$, at the end of time step $t'$: robots can only become settled at corners of $R(t')$, and can only change primary directions at halls of $R(t')$. We do not include the initialization of a primary direction when a robot arrives at $s$. We will later show that the ``no fake halls'' assumption is always true, so the propositions below hold unconditionally.

\begin{proposition}
\label{shortestpathproposition}
Assuming no fake halls at time $t$, a robot $A_i$ active at the beginning of time step $t$ has traveled an optimal path in $R$ from $s$ to its current position.
\end{proposition}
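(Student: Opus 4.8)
The plan is to show that the trajectory of $A_i$ decomposes into maximal \emph{monotone staircase} segments, each of which is a shortest path between its endpoints, glued together at a sequence of halls which — being articulation points — every $s$-to-current path is forced to cross in the same order.

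First I would record the geometric fact that drives everything: between two consecutive changes of primary direction, $A_i$ only ever steps in its current primary direction or in its secondary direction, and since the secondary is the $90^\circ$ clockwise rotation of the primary, these are one horizontal and one vertical unit vector. Hence during such a ``leg'' the robot's path is monotone in each coordinate, so a leg running from a vertex $a$ to a vertex $b$ has length exactly $\|a-b\|_1$. Since $\|a-b\|_1 = d_{\mathbb{Z}^2}(a,b) \le d_R(a,b)$, and the leg itself is an $a$-$b$ path in $R$ of that length, the leg is a shortest path in $R$ and $d_R(a,b) = \|a-b\|_1$.

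Next, let $h_1,\ldots,h_k$ be the vertices at which $A_i$ has changed primary direction, listed in the order they occur along the trajectory, let $t_j < t$ be the time of the $j$th such turn, and let $p$ be the current position. By the algorithm a turn happens only where both primary and secondary are blocked and the robot does not settle, so the ``no fake halls at time $t$'' hypothesis makes each $h_j$ a genuine hall of $R(t_j)$, hence by Lemma~\ref{treestructure} an articulation point of $R(t_j)$. The structural heart of the proof is to upgrade this to: each $h_j$ is an articulation point of $R$ separating $\{s,h_1,\ldots,h_{j-1}\}$ from $\{h_{j+1},\ldots,h_k,p\}$. Granting this, any $s$-$p$ path in $R$ must visit $h_1,\ldots,h_k$ in order, so $d_R(s,p) = d_R(s,h_1) + \sum_{j=1}^{k-1} d_R(h_j,h_{j+1}) + d_R(h_k,p)$; by the previous paragraph each summand equals the length of the corresponding leg of $A_i$'s trajectory, so the whole trajectory has length $d_R(s,p)$ and is optimal. (The case $k = 0$, where $A_i$ has taken a single monotone leg from $s$, is exactly the previous paragraph, and the degenerate cases — $A_i$ still at $s$, or settled immediately — are trivial.)

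The hard part is the upgrading step: showing that turning at a hall amounts to an \emph{irrevocable} descent into the child subtree of $T(R(t_j))$ on the far side of $h_j$, so that $A_i$ never re-crosses a hall it has turned at and the later legs stay strictly beyond $h_j$; and that this separation property is inherited by $R$ itself (here I would use that passing from $R$ to $R(t_j)$ only removes corners, which by Lemma~\ref{removecornersimplyconnected} preserves all pairwise distances, and argue the relevant cut survives). I expect this to be handled not in isolation but by a simultaneous induction on $t$ that establishes the ``no fake halls'' property, Proposition~\ref{shortestpathproposition}, and an invariant confining each active robot to a single root-to-node branch of $T(R(t))$ — at which point the monotonicity observation and Lemmas~\ref{removecornersimplyconnected} and \ref{treestructure} close the argument.
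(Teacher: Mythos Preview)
Your decomposition into monotone legs glued at halls is exactly the paper's argument, and your handling of each leg (length equals $\ell_1$ distance, hence a shortest path in $R$) is correct. The paper likewise uses the tree $T(R(t))$ induced by halls, observes that $A_i$ descends this tree without backtracking, and concludes optimality in $R(t)$ before transferring to $R$ via Lemma~\ref{removecornersimplyconnected}(b).

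There is one genuine wobble in your plan, at precisely the point you flag as the ``hard part.'' You set as your target that each $h_j$ is an articulation point of \emph{$R$ itself}, and propose to get there from ``$h_j$ is an articulation point of $R(t_j)$'' via Lemma~\ref{removecornersimplyconnected}. But that lemma gives you distance preservation, not cut preservation: a hall of $R(t_j)$ can fail to be an articulation point of $R$ (it may have additional $R$-neighbours that were settled before $t_j$, or $diag(h_j)$ itself may lie in $R$ and have been settled), so ``the relevant cut survives'' is not something you can extract from Lemma~\ref{removecornersimplyconnected}. The paper sidesteps this entirely: it never claims the $h_j$ cut $R$. Instead it argues the whole optimality statement \emph{inside $R(t)$}, where the halls really are cuts and the tree $T(R(t))$ forces the order, obtaining $|\text{path of }A_i| \le d_{R(t)}(s,p)$; only then does it invoke Lemma~\ref{removecornersimplyconnected}(b) once, to replace $d_{R(t)}(s,p)$ by $d_R(s,p)$. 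So your additive identity $d_R(s,p)=\sum d_R(h_{j},h_{j+1})$ is the right target, but you should derive it from the corresponding identity in $R(t)$ plus distance preservation, rather than from a cut property of $R$ that may not hold.
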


\begin{proof}
By the assumption, the only robots that became settled did so at corners. Consequently, by Lemma \ref{removecornersimplyconnected}, $R(t)$ is a connected graph, and there is a path in $R(t)$ from $s$ to $A_i$. The path $A_i$ took might not be in $R(t)$, but whatever articulation points (and in particular halls) $A_i$ passed through must still exist, by definition.

Since $A_i$ is active at the beginning of time $t$, by the algorithm, it has taken a step every unit of time up to $t$. Until $A_i$ enters its first hall, and between any two halls $A_i$ passes through, it only moves in its primary and secondary directions. This implies that the path $A_i$ takes between the halls of $R(t)$ must be optimal (since it is optimal when embedded onto the integer grid $\mathbb{Z}^2$). We note also that $A_i$ never returns to a hall $h$ it entered a connected component of $R(t)$ from, since the (possibly updated) primary direction pulls it away from $h$.

We conclude that $A_i$'s path consists of taking locally optimal paths to traverse the connected components of the tree $T(R(t))$ in order of increasing depth. Since in a tree there is only one path between the root and any vertex, this implies that $A_i$'s path to its current location is at least as good as the optimal path in $R(t)$. By Lemma \ref{removecornersimplyconnected}, b, this implies that $A_i$'s path is optimal in $R$.  
\end{proof}

\begin{corollary}
\label{distancecorollary}
Assuming no fake halls at time $t$,
\begin{enumerate}[label=(\alph*)]
\item For all $i < j$, the distance between the robots $A_i$ and $A_j$, if they are both active at the beginning of $t$, is at least $2(j-i)$
\item No collisions (two robots occupying the same vertex) have occurred. 
\end{enumerate}
\end{corollary}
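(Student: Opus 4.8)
The plan is to extract both parts from Proposition~\ref{shortestpathproposition} together with a simple count of how fast the door $s$ can emit robots; no further geometry is needed. For each $k$ write $e_k$ for the time step at the beginning of which $A_k$ first occupies $s$. First I would observe that at the beginning of step $e_k$ there is a mobile robot at $s$, so by the emergence rule no new robot appears at the end of step $e_k$; together with $e_{k+1} > e_k$ this gives $e_{k+1} \ge e_k + 2$, and hence, for all $i < j$,
\[
e_j \;\ge\; e_i + 2(j-i).
\]

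Next I would note that a robot which stays active under FCDFS changes location on \emph{every} step: the algorithm gives it no ``stay put'' option --- it moves in its primary or secondary direction, moves in a newly set primary direction at a hall, or else settles. Plugging this into Proposition~\ref{shortestpathproposition}: if $A_k$ is active at the beginning of step $t$, the $t - e_k$ moves it has taken trace out a shortest path in $R$ from $s$ to its current position, so that position lies at distance exactly $t - e_k$ from $s$. Part~(a) is then immediate from the triangle inequality: for $i < j$ with $A_i$ and $A_j$ both active at the beginning of step $t$, their distance is at least $(t - e_i) - (t - e_j) = e_j - e_i \ge 2(j-i)$.

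For part~(b) I would show that for every $t' \le t$, in the configuration at the beginning of step $t'$ all robots occupy pairwise distinct vertices, and that no collision occurs during the move out of step $t'$. Since the ``no fake halls at time $t$'' assumption contains the corresponding assumption at every earlier time, part~(a) applies at each $t' \le t$ and already rules out two active robots coinciding (they are even kept at distance $\ge 2$, so they cannot swap cells in one synchronous step either). A collision involving a settled robot would likewise be excluded: an active robot only moves onto a vertex it sees unoccupied at the start of the step, and a settled robot's cell has been occupied ever since it settled, so such a collision would force an earlier coincidence of robots, contradicting part~(a). A short induction on $t$ assembles these observations.

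The main obstacle, such as it is, is not the inequalities but pinning down the two facts underneath them: that an active robot genuinely takes a fresh step every time unit --- so that its number of steps equals the length of the path it has traveled, which is $t - e_k$ --- and that the door cannot fire faster than once every two steps even when cells around $s$ are momentarily blocked. Both are quick once stated precisely, and the first is the exact point where the argument inherits its dependence on Proposition~\ref{shortestpathproposition}, and hence on the ``no fake halls at time $t$'' hypothesis.
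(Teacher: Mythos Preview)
Your proposal is correct and follows essentially the same route as the paper: bound the inter-arrival gap at $s$ by $2$, invoke Proposition~\ref{shortestpathproposition} so that an active $A_k$ sits at distance exactly $t-e_k$ from $s$, and read off (a); then derive (b) from (a). Your version is in fact a little more careful than the paper's own proof---you make the triangle inequality explicit and spell out why active--settled collisions are excluded---whereas the paper simply asserts that $A_i$'s distance from $s$ ``is never shortened'' and that (b) ``follows immediately from (a).''
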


\begin{proof}
For proof of (a), note that at least two units of time pass between every arrival of a new robot (since in the first time step after its arrival, a newly-arrived robot blocks $s$). Hence, when $A_j$ arrives, $A_i$ will have walked an optimal path towards its eventual location at time $t$, and it will be at a distance of $2(j-i)$ from $s$. This distance is never shortened up to time $t$, as $A_i$ will keep taking a shortest path. 

(b) follows immediately from (a).  
\end{proof}

From Corollary \ref{distancecorollary} and determinism, we get:

\begin{lemma}
\label{followtheleaderlemma}
Suppose $A_i$ is active at the beginning of time step $t$. Assuming no fake halls at time $t$, $next(A_{i+1}) = prev(A_i)$.
\end{lemma}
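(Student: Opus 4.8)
The plan is to prove, by induction, a statement stronger than the lemma: that $A_{i+1}$ retraces the whole trajectory of $A_i$ with a delay of exactly two time steps. Precisely, writing $p_j(\theta)$ for the vertex occupied by $A_j$ at the beginning of step $\theta$, I claim that for every index $j$ and every step $\theta$ at whose beginning $A_j$ is active, the robot $A_{j+1}$ at the beginning of step $\theta+2$ occupies $p_j(\theta)$ and is in the same internal state as $A_j$ was at the beginning of step $\theta$ --- same primary direction and same two stored previous positions. Since $next(A_{i+1})$ is $A_{i+1}$'s position at the start of step $t+1$ and $prev(A_i)$ is $A_i$'s position at the start of step $t-1$, the lemma is the positional half of this claim with $j=i$ and $\theta=t-1$, whose hypothesis ``$A_i$ active at the beginning of step $t-1$'' is implied by the lemma's hypothesis. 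The induction is a strong induction on $\theta$, proved simultaneously for all indices.

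Base case: consider the first step at which $A_{i+1}$ rests on $s$, say step $\tau+2$. Because $A_i$ is active at the later step $t$ it cannot have settled on $s$, so it stepped off $s$ after one step, $s$ was free one step later, a fresh robot --- necessarily $A_{i+1}$ --- emerged, and thus $A_{i+1}$ reaches $s$ exactly two steps after $A_i$ did. Both robots are then in the ``never moved'' state on $s$, so the configurations agree; that their initializations also agree (the clockwise search from ``up'' picks the same free neighbour of $s$) is an instance of the view-matching argument below, using that the only robot to touch a neighbour of $s$ during steps $\tau,\dots,\tau+2$ is $A_i$ itself, which does not settle there, and that by Corollary~\ref{distancecorollary} --- together with the monotonicity of distance from $s$ recorded in its proof --- no other robot ever lies within distance $1$ of $s$ in that window.

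Inductive step: by determinism it suffices to show that the set of vertices within distance $2$ of $p:=p_j(\theta)=p_{j+1}(\theta+2)$ (these coincide by the inductive hypothesis for $A_j$ at the preceding step) that are \emph{occupied} --- by a wall or by a robot, which the robots cannot distinguish --- is the same at step $\theta$ as at step $\theta+2$; the two robots then run identical Look--Compute--Move operations. Walls agree trivially. By Corollary~\ref{distancecorollary}(a), the only active robots within distance $2$ of $p$ are $A_{j-1}$ and $A_{j+1}$ at step $\theta$ and $A_j$ and $A_{j+2}$ at step $\theta+2$, each at distance exactly $2$. The robot trailing behind is irrelevant: Algorithm~\ref{alg:FCDFS} inspects only the four neighbours of $p$, the primary and secondary cells, and (when $p$ is a hall, where the ``no fake halls'' assumption makes $diag(p)$ a wall, occupied at both steps) the cell $diag(p)$; the single ``backward'' cell among these is the stored previous vertex, which the distance bound applied to the trailing robot shows is empty at both steps. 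The leading robot matches because the inductive hypothesis for $A_{j-1}$ (at earlier steps) places $A_j$ at step $\theta+2$ on exactly the vertex $A_{j-1}$ occupied at step $\theta$. Finally, a vertex of $R$ near $p$ occupied by a robot at one of the two steps must be occupied at the other: a settled robot stays settled, and the only alternative --- a robot still active near $p$ at step $\theta$ that steps off and settles onto a view-changing vertex within the next two steps --- is excluded by Proposition~\ref{shortestpathproposition}, since replaying such a motion would force two $\mathbb{Z}^2$-adjacent vertices to differ by $3$ in their distances from $s$.

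The real obstacle is the last point, specifically the case in which the robot $A_{j-1}$ two steps ahead of $A_j$ settles during the window in question. One must then show that $A_j$, and hence $A_{j+1}$ two steps further back, re-routes around the vacated vertex in precisely the same way, so that the trajectories still agree --- which is why the strengthened, trajectory-level statement and the simultaneous induction over all indices are needed rather than a direct argument for the bare lemma. This is exactly where the ``no fake halls at time $t$'' assumption and Lemma~\ref{removecornersimplyconnected} are used: $A_{j-1}$ settles on a corner of the current region, deleting which changes neither connectivity nor any distance, so $A_j$'s shortest route to its own (still valid) destination never passed through that vertex, the tree $T(R(t))$ is merely pruned rather than reshaped, and the primary/secondary turning rule carries $A_j$ --- and behind it $A_{j+1}$ --- along the same, now slightly shorter, path.
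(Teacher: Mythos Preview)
Your approach is essentially the paper's: the paper's entire proof of this lemma is the single clause ``From Corollary~\ref{distancecorollary} and determinism, we get'', and your induction is a careful unpacking of precisely that idea --- the distance bound from Corollary~\ref{distancecorollary} guarantees only $A_{j\pm 1}$ can lie within the visible range, and determinism then forces identical moves from identical views and states. Your write-up is far more detailed (and more explicit about the edge case where the leading robot settles) than what the paper actually provides, but the underlying argument is the same.
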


We note that Lemma \ref{followtheleaderlemma} also indicates that if at the beginning of time step $t$, $A_i$ is active, then $A_{i+1}$ will be active at the beginning of time step $t+1$. 

We can now show that the ``no fake halls'' assumption is true, and consequently, the propositions above hold unconditionally.

\begin{proposition}
\label{nofakehalls}
For any $t$, at the end of time step $t$: robots only become settled at corners of $R(t)$, and only change primary directions halls of $R(t)$ (not including the primary direction decided at initialization).
\end{proposition}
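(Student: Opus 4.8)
The plan is to argue by induction on $t$. Assume that the "no fake halls" assumption holds for all $t' < t$, so that all the earlier propositions---Proposition~\ref{shortestpathproposition}, Corollary~\ref{distancecorollary}, and Lemma~\ref{followtheleaderlemma}---are available at time $t$. I must show that at the end of time step $t$, whenever a robot $A_i$ settles it is genuinely at a corner of $R(t)$, and whenever $A_i$ changes its primary direction it is genuinely at a hall of $R(t)$. By the algorithm, a robot only does either of these things when \emph{both} its primary and secondary directions are blocked; call the current vertex $v$. Since the robot's primary and secondary directions are perpendicular (secondary is the $90\degree$ clockwise rotation of primary), $v$ has at most two neighbours in $R(t)$: the one it came from, $prev(A_i)$, and possibly the neighbour directly "behind" the secondary direction. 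If $v$ has $\leq 1$ neighbour in $R(t)$ it is a corner by case~(a) of the corner definition and the robot settles correctly, so the interesting case is exactly two neighbours $u = prev(A_i)$ and $u'$, and the decision hinges on what occupies $diag(v)$.

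The crux is the test \texttt{$prev(prev(A_i)) = diag(v)$ $\lor$ $diag(v)$ is unoccupied}. I need to show this test returns true precisely when $diag(v) \in R(t)$ (i.e. $v$ is a corner of $R(t)$), and false precisely when $diag(v)$ is a wall or a settled robot (i.e. $v$ is a hall of $R(t)$). One direction: if $diag(v)$ is unoccupied it is certainly in $R(t)$, so $v$ is a corner---fine. The delicate direction is when $diag(v)$ is occupied: I must show that $diag(v)$ is occupied by an \emph{active} robot (making $v$ a corner of $R(t)$ that only looks like a hall) if and only if $prev(prev(A_i)) = diag(v)$. For the "if": if the robot itself was at $diag(v)$ two steps ago, then $diag(v) \in R$ and, since $A_i$ was active and moving then, $diag(v)$ cannot have been occupied by a settled robot, so $diag(v)$ is a genuine grid vertex and $v$ is a corner. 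For the "only if"---the hard part---I would use Lemma~\ref{followtheleaderlemma} (follow-the-leader) together with the distance bound of Corollary~\ref{distancecorollary}. If $diag(v)$ is occupied at time $t$ by an active robot $B$, then $B$ is adjacent-diagonally to $A_i$, hence within Manhattan distance $2$; by the distance bound $2(j-i) \leq 2$, so $B$ must be $A_{i-1}$ or $A_{i+1}$. I would rule out $A_{i+1}$ (it follows behind $A_i$, at $next(A_{i+1}) = prev(A_i) = u \neq diag(v)$) and rule out $A_{i-1}$ being anywhere other than $prev(prev(A_i))$, again via follow-the-leader applied one step earlier: $prev(A_{i-1}) = prev(prev(A_i))$ is forced, and then a short case analysis of where $A_{i-1}$ can move from there (it is blocked in the same primary/secondary directions as $A_i$ modulo the one-step offset) shows its only option consistent with occupying $diag(v)$ at time $t$ is exactly $prev(prev(A_i)) = diag(v)$.

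The main obstacle I anticipate is this last geometric case analysis: pinning down, from the follow-the-leader relation and the shared primary/secondary directions, that an active robot on $diag(v)$ at time $t$ must be the immediate predecessor $A_{i-1}$ sitting on the cell $A_i$ occupied two steps ago. This requires carefully tracking the relative geometry of consecutive robots---that $A_{i-1}$'s position, primary direction, and secondary direction are all determined by $A_i$'s history via determinism and Lemma~\ref{followtheleaderlemma}---and checking that no configuration lets $A_{i-1}$ "wrap around" onto $diag(v)$ without having passed through $prev(prev(A_i))$. Once this is established, the converse statement about primary-direction changes (that $v$ is a genuine hall whenever the test fails) follows: the test failing means $diag(v)$ is occupied by something that is \emph{not} $A_i$'s own two-steps-ago position, hence by the above it is a wall or settled robot, hence $v$ is a hall of $R(t)$, and the robot correctly turns. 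This closes the induction and makes the "no fake halls" assumption unconditional, so all preceding propositions hold outright.
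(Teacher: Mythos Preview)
Your inductive framework and the reduction to deciding whether the occupant of $diag(v)$ is an active robot are exactly the paper's approach. The gap is that you have swapped the roles of $A_{i-1}$ and $A_{i+1}$ in the case analysis, and this invalidates both of your ``rule out'' steps.

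The relation $next(A_{i+1}) = prev(A_i)$ from Lemma~\ref{followtheleaderlemma} says that $A_{i+1}$'s position at time $t+1$ equals $A_i$'s position at time $t-1$; applied at time $t-1$, it gives that $A_{i+1}$'s position at time $t$ is $prev(prev(A_i))$, \emph{not} $prev(A_i) = u$ as you write. Thus if $prev(prev(A_i)) = diag(v)$, it is precisely $A_{i+1}$---the active follower---that occupies $diag(v)$, so $diag(v) \in R(t)$ and $v$ is a corner. This is how the paper handles the ``settle'' branch; you should not rule $A_{i+1}$ out but rather use it. Conversely, $A_{i-1}$ is \emph{ahead} of $A_i$: the same lemma with shifted indices gives $next(A_i) = prev(A_{i-1})$, so $A_{i-1}$ was at $v$ two steps ago, and your claim that $A_{i-1}$ sits at $prev(prev(A_i))$ is backward. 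The paper disposes of $A_{i-1}$ not by a geometric wrap-around analysis but by a distance argument: since $A_{i-1}$ arrived two steps earlier and travels a shortest path (Proposition~\ref{shortestpathproposition}), at time $t$ it is at distance $dist_R(s,v)+2$ from $s$; but $diag(v)$ is adjacent to $u = prev(A_i)$, so $dist_R(s,diag(v)) \le dist_R(s,u)+1 = dist_R(s,v) < dist_R(s,v)+2$, and $A_{i-1}$ cannot be at $diag(v)$. With the roles corrected, the ``change direction'' branch concludes that the occupant of $diag(v)$ is neither $A_{i+1}$ (else $prev(prev(A_i)) = diag(v)$ and the test would fire, causing $A_i$ to settle instead) nor $A_{i-1}$ (by the distance bound), hence it is a wall or settled robot and $v$ is a genuine hall of $R(t)$.
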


\begin{proof}
The proof of the proposition is by induction. The base case for $t=1$ is trivially true.

Suppose that up to time $t-1$, the proposition holds. Note that this means the ``no fake halls'' assumption holds up to time $t$, so we can apply the lemmas and propositions above to the algorithm's configuration at the beginning of time $t$. 

We will show that the proposition statement also holds at time $t$. Let $A_i$ be an active robot whose location at the beginning of $t$ is $v$. First, consider the case where $v = s$. The algorithm only enables $A_i$ to settle at $s$ if it is surrounded by obstacles at all directions. Any obstacle adjacent to $A_i$ must be a wall of $R(t)$ (as any active robot must be at a distance at least $2$ from $A_i$, due to Corollary \ref{distancecorollary}). Hence, if $A_i$ settles at $s$, $s$ is necessarily a corner, as claimed.

We now assume that $v \neq s$. We separate the proof into two cases:

Case 1: Suppose $A_i$ becomes settled at the end of time step $t$. Then by the algorithm, at the beginning of $t$, $A_i$ detects obstacles in its primary and secondary directions. These must be walls of $R(t)$ due to Corollary \ref{distancecorollary}, so $v$ is either a corner or a hall of $R(t)$. Since $A_i$ settled, we further know that either $diag(v)$ is empty, or $prev(prev(A_i)) = diag(v)$. In the former case, $v$ is a corner of $R(t)$. In the latter case, we know from Lemma \ref{followtheleaderlemma} and from the fact that no collisions occur that the only obstacle detected at $diag(v)$ is $A_{i+1}$, which is an active robot, so $v$ is again a corner of $R(t)$. In either case a corner is detected and the agent is settled.

Case 2: Suppose $A_i$ changed directions at the end of time step $t$. Then it sees two adjacent obstacles, and an obstacle at $diag(v)$. As in case 1, we infer that $v$ is either a corner or a hall. If it is a corner, then $diag(v)$ is an active agent. By Corollary \ref{distancecorollary}, it is either $A_{i+1}$ or $A_{i-1}$. It cannot be $A_{i+1}$, as then $A_i$'s position two time steps ago would have been $diag(v)$, so it would become settled instead of changing directions. It cannot be $A_{i-1}$, as $diag(v)$ is closer to $s$ than $v$, and $A_{i-1}$ has arrived earlier than $A_i$, and has been taking a shortest path to its destination. Hence, $diag(v)$ cannot be an active agent, and $v$ must be a hall as claimed.  
\end{proof}

We have shown that the no fake-hall assumption is justified at all times $t$, hence we can assume that the propositions introduced in this section hold unconditionally.

\begin{proposition}
\label{timeanalysis}
Let $V$ be the number of vertices of $R$. At the end of time-step $2V-1$, every cell is occupied by a robot.
\end{proposition}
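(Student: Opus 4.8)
The plan is to show that the robots settle at a rate that fully occupies $R$ by time $2V-1$, by combining the ``follow-the-leader'' structure (Lemma~\ref{followtheleaderlemma}) with the fact that each robot walks a shortest path (Proposition~\ref{shortestpathproposition}) and never idles while active. First I would observe that, by the model, a new robot $A_k$ emerges at $s$ at the end of time step $2k-1$ (the source is free at odd times $1,3,5,\ldots$ as long as no robot has settled on $s$), so at the beginning of time step $2V-1$ at least $V$ robots have been introduced, provided the process has not terminated earlier. The real content is to argue that none of these robots ``wastes'' time: an active robot takes a genuine step every time step, and by Proposition~\ref{shortestpathproposition} its cumulative displacement from $s$ equals its travel time, so a robot that has been active for $T$ steps sits at distance $T$ from $s$ along a shortest path in $R$.

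Next I would set up a counting/induction argument on time. The key invariant I would maintain is: at the beginning of time step $t$, the set of vertices occupied by settled robots together with the set occupied by active robots is exactly an ``initial segment'' of the dispersal, in the sense that the occupied region is connected, contains $s$, and has size equal to the number of robots that have emerged and not yet been double-counted. More usefully, I would track the number of \emph{settled} robots: I claim that once the frontier of active robots reaches the ``far end'' of the current branch of the tree $T(R(t))$, the deepest active robot settles (it is at a corner, by Proposition~\ref{nofakehalls}), and thereafter $R(t)$ loses that corner while the chain of followers behind it, by Lemma~\ref{followtheleaderlemma}, advances one step each toward their own eventual corners. So from some time onward, \emph{every} time step produces exactly one newly settled robot, until all $V$ are settled. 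Counting: if the first robot settles at time $t_0$, then by time $t_0 + (V-1)$ all $V$ robots are settled, and one must check $t_0 + V - 1 \le 2V - 1$, i.e. $t_0 \le V$; since the first robot settles after walking a shortest path to the deepest corner reachable first, $t_0$ is at most the eccentricity of $s$, which is at most $V-1 < V$.

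The step I expect to be the main obstacle is making precise the claim that ``from time $t_0$ onward, exactly one robot settles per time step, with no gaps.'' One has to rule out a scenario where the lead robot settles, but the next robot behind it is not yet positioned to settle on the \emph{following} step --- i.e., one must show the settling events cascade without stalls. This is where Lemma~\ref{followtheleaderlemma} ($next(A_{i+1}) = prev(A_i)$) does the work: when $A_i$ settles at a corner $c$ at the end of step $t$, $A_{i+1}$ moves into $prev(A_i)$, which is $diag(c)$ or the other neighbor of $c$; the new region $R(t+1) = R(t) - c$ is simply connected (Lemma~\ref{removecornersimplyconnected}), and $A_{i+1}$ is now the deepest active robot, heading along its primary/secondary directions toward the next corner of $R(t+1)$ --- which, crucially, it reaches in one step because removing $c$ exposes a new corner adjacent to where $A_{i+1}$ now stands. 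I would formalize this ``corner-peeling'' claim as a small lemma: in a simply connected region, the deepest leaf-component of $T(R(t))$ along the DFS order always presents a corner at the current frontier position, so the active robot there settles immediately. With that lemma in hand, the induction closes and the $2V-1$ bound follows by the arithmetic above.
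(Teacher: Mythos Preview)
Your cascade argument has a genuine gap: the claim that ``from time $t_0$ onward, exactly one robot settles per time step'' is false. Take the L-shaped region $R = \{(0,0),(0,1),(0,2),(1,0),(2,0)\}$ with $s=(0,0)$. Robot $A_1$ goes up and settles at $(0,2)$ at step $4$; $A_2$ follows and settles at $(0,1)$ at step $5$. But now $A_3$, which has just emerged at $s$, must traverse the \emph{other} arm to reach $(2,0)$: it moves at steps $6$ and $7$ and only settles at step $8$. No robot settles during steps $6$ and $7$. In general, once a branch of $T(R(t))$ is exhausted, the new leader may have to walk an arbitrary distance along a fresh branch before reaching its corner, so settlements cannot cascade one-per-step. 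Your proposed corner-peeling lemma (``removing $c$ exposes a new corner adjacent to where $A_{i+1}$ now stands, so it settles immediately'') fails for the same reason: when $A_2$ settles, $A_3$ stands at $(0,0)$, which \emph{is} a corner of $R(6)$, yet the algorithm does not settle there---it steps toward the unoccupied neighbour $(1,0)$, as it must. The robots are spaced two apart, not one, and the next corner need not be one step away.

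The paper sidesteps settlements altogether and counts \emph{emergences} at $s$. The crucial observation you are missing is that an arriving robot's destination is never $s$ itself unless $s$ is the last unoccupied cell: every $R(t)$ with more than one vertex has at least two corners, so there is always a corner other than $s$ for the robot to head toward, and by the initialization rule it therefore steps off $s$ one time unit after arriving. Consequently $s$ is free at the beginning of every odd step, robot $A_k$ emerges at the end of step $2k-1$, and at the end of step $2V-1$ all $V$ robots are present. By the no-collision corollary they occupy $V$ distinct cells, which is all of $R$. Note also that the proposition asserts only that every cell is \emph{occupied} at time $2V-1$, not that every robot is \emph{settled}; the last robot $A_V$ is still active at that moment, so an argument that waits for all $V$ robots to settle is aiming at the wrong target.
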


\begin{proof}
Propositions \ref{shortestpathproposition} and \ref{nofakehalls} imply that robots take a shortest path in $R$ to their destination. That means that as long as the destination of a robot is not $s$ itself, robots will step away from $s$ one unit of time after they arrive. Until then, this means that robots arrive at $s$ at rate one per two time steps.

Every robot's end-destination is a corner, and by the initialization phase of the algorithm, the destination is never $s$ unless $s$ is completely surrounded. Since there are no collisions, there can be at most $V$ robots in $R$ at any given time. By Lemma \ref{removecornersimplyconnected}, robots that stop at corners keep $R$ connected. Furthermore, every $R(t)$ is a rectilinear polygon, so unless it has exactly one vertex, it necessarily has at least two corners. This means that the destination of every robot is different from $s$ unless $s$ is the only unoccupied vertex. Hence, a robot whose destination is $s$ will only arrive when $s$ is the only unoccupied vertex, and this will happen when $V$ robots have arrived, so after at most $2V-1$ time steps. This is exact, since it is impossible to do better than $2V-1$.  
\end{proof}

Propositions \ref{timeanalysis} and \ref{shortestpathproposition}, alongside the ``no fake halls'' proof, complete our analysis. They show that FCDFS has a makespan of $2V-1$, and also that the durations of activity of the individual robots are optimal, since every robot travels a shortest path to its destination without stopping. 

As every vertex must be occupied for the dispersal to end, a trivial lower bound on the total travel for any dispersal algorithm is $\sum_{v \in R}{dist(s,v)}$. Since this is achieved by our algorithm, total travel is also minimized.

In practice, the energy savings of our algorithm are dependent on the shape of the environment $R$. We take as a point of comparison the Depth-First Leader-Follower algorithm of Hsiang et al. \cite{hsiang}. On a 1-dimensional line of length $n$, both FCDFS and DFLF require the same total travel, $O(n^2)$, so no improvement is attained. In contrast, on an $n$-by-$n$ square grid, DFLF requires total travel $O(n^4)$ in the worst case, and FCDFS requires $O(n^3)$ - significantly less. This is because the DFLF strategy starting from a corner might cause the leader, $A_1$, to ``spiral'' inwards into the grid, covering every one of its $n^2$ vertices in $n^2 - 1$ moves; the subsequent robot $A_i$ will make $n^2 - i$ moves, for a sum total of $O(n^4)$. FCDFS, on the other hand, distributes the path lengths more uniformly. Note that both algorithms take the exact same amount of time to finish.

\textit{Where is it best to place $s$?} If we want to minimize the total travel, by the formula given above, the best place to place $s$ is the vertex of $R$ that minimizes the sum of distances $\sum_{v \in R}{dist(s,v)}$ (there may be several). This is the discrete analogue of the so-called Fermat-Toricelli point, or the ``geometric median'' \cite{krarup1997torricelli}. 

\subsection{The number of persistent states}\label{section:numstates} 

\begin{algorithm}[ht]
  \caption{5-bit FCDFS}
  \begin{algorithmic}
    \State Let $v$ be the current location of $A$.
    
    \If{$v$ has no unoccupied neighbours}
        \State Settle.
        \State $b_3b_4b_5 \gets 011$
    \ElsIf{$b_4b_5 = 00$}
        \State Search clockwise, starting from the "up" direction, for an unoccupied vertex, and set primary direction to point to that vertex.
        \State $b_4b_5 \gets 10$
    \EndIf
    
    \If{$A$ cannot move in primary or secondary directions}
        \If {$v$ has just one neighbour}
            \State Settle.
            \State $b_3b_4b_5 \gets 011$
        \ElsIf{$(b_5 = 1 \land b_3+b_4=1)$ $\lor$ $diag(v)$ is unoccupied} 
            \State Settle.
            \State $b_3b_4b_5 \gets 011$
        \Else  
            \State Set primary direction to obstacle-less direction not equal to $180\degree$ rotation of previous direction stepped in (i.e. the neighbour of $v$ we haven't visited yet; this can be inferred from $b_1b_2$ and $b_3$).  
            \State $b_4b_5 \gets 10$
        \EndIf
    \EndIf
    
    \If{$b_4b_5$ was not updated at this time step}\Comment{i.e. $b_5 = 1$ or time to update $b_5$}
        \State $b_4b_5 \gets b_3 1$
    \EndIf
    
    \If{$A$ can move in its primary direction}
        \State Step in the primary direction.
        \State $b_3 \gets 0$
    \ElsIf{$A$ can step in secondary direction}
        \State Step in the secondary direction.
        \State $b_3 \gets 1$
    \Else
        \State Settle.
        \State $b_3b_4b_5 \gets 011$
    \EndIf
  \end{algorithmic}
  \label{alg:5bitFCDFS}
\end{algorithm}

As in previous work on uniform dispersal, our robots are finite-state automatons with $O(1)$ persistent memory bits or states that carry over between time steps. The requirement of finite memory is important, as it allows for scalability: the robots' memory need not scale with the size or complexity of the environment. 

There has been some interest in the question of just how little memory one can get away with. It has been shown that oblivious robots - robots with just one persistent state - are incapable of solving the dispersal problem, even with infinite visibility \cite{barraswarm1}. Consequently, any dispersal algorithm requires some number of persistent states, and we are interested in implementing our algorithm with as few as possible - i.e. bringing the robots as close as possible to ``obliviousness'' of their prior history and to center their decisions, as much as possible, on their current position and frame of reference. 

Moreover, Algorithm \ref{alg:FCDFS} required the robots to remember their previous locations relative to their current location and to be able to use them as points of comparison. The 5-bit implementation shows how this could be done through remembering only the previous two relative directions of motion. A robot is then required only to know whether there are obstacles at the four cardinal directions (up, down, left, right), and at its diagonal, which is always at a $135\degree$ degree rotation from the primary direction. This simplifies the localization computations. 

We implemented a 5-bit or $2^5$-state version of our algorithm on a simulator (see Algorithm \ref{alg:5bitFCDFS}). A robot's state is described by bits $b_1b_2b_3b_4b_5$. All bits are initially $0$. $b_1b_2$ describe the primary direction (one of four), and $b_3$ tells us whether the previous step was taken in the primary direction (if $b_3 = 0$) or in the secondary direction (if $b_3 = 1$). $b_4b_5$ is a counter that is reset to $10$ upon entering a hall or one step after initialization, and thereafter is equal to $*1$, where $*$ is a bit that tells us whether we walked in the primary or secondary direction two steps ago (by copying $b_3$). A robot that detects an obstacle at its diagonal interprets its position as a fake hall (i.e. a corner) as long as $b_5 = 1$ and $b_3 + b_4 = 1$, that is, as long as at least one time step passed since the last hall, and our previous position was diagonal to us. In order to conserve memory, our robots do not strictly speaking have a ``settled'' state. Instead, once a robot determines it is in a corner (and so needs to settle), it sets $b_3b_4b_5$ to $011$, indicating that it visited its diagonal--this causes it to never move again.

\subsection{The impossibility of minimizing total travel for general grid environments} 

We saw that there is a local rule that minimizes total travel for simply connected grid environments. In this section we show that, for robots with finite visibility, there is no local rule that universally minimizes total travel for all connected grid environments.

Let $r$ be the visibility range of the robots. Consider the grid environment in Figure \ref{fig:impossibility} (not drawn to scale). It connects a set of $10r$ columns of width 1 spaced $2r$ cells apart. The bottom row has total length $20r^2$. Most of the columns are dead-ends and have a height of $30r^2$. The first column and an additional column connect to the top row, and have height $30r^2 + 1$. Label the grid environment where this additional column is the $k$th column $G(k)$. The door $s$ is at the bottom left.
 
\begin{figure}[!ht]
    \centering
    \includegraphics[height=2.086in]{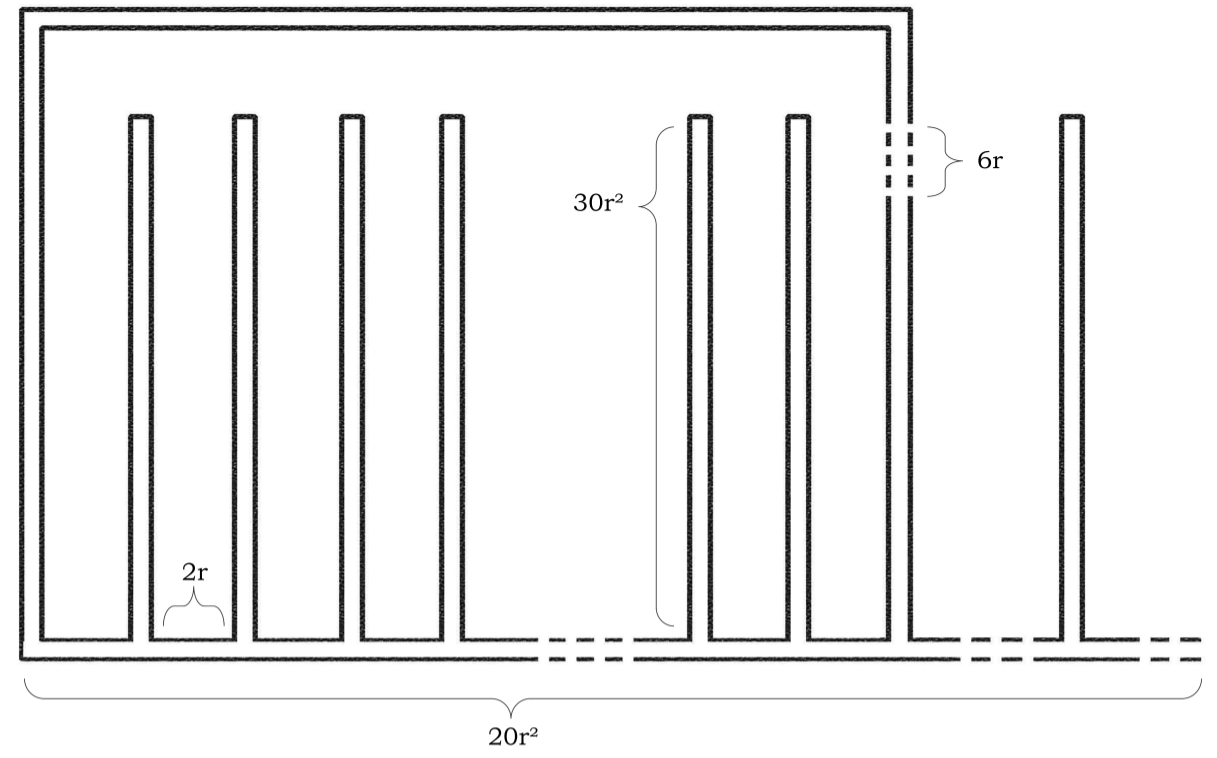}
    \caption{The construction $G(k)$.}
    \label{fig:impossibility}
\end{figure}

It is readily seen that the total travel required by an optimal solution for any environment $G(k)$ is $\sum_{v \in G(k)}{dist(s,v)}$, where $s$ is the door of $G(k)$ (let a line of robots going up the first column fill the top row, and let robots going to the right fill the other columns).

\begin{proposition}
\label{impossibility}
Let \textbf{ALG} be a local rule for uniform dispersal of robots with visibility range $r$. There is an environment $G(k)$ for which the total travel of \textbf{ALG} is at least $\sum_{v \in R}{dist(s,v)} + 1$.
\end{proposition}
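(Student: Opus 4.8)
The plan is to argue by contradiction. Suppose \textbf{ALG} attains total travel exactly $\sum_{v}dist(s,v)$ on $G(k)$ for every $k$. Then on each $G(k)$ every robot walks a shortest path from $s$ to the cell it settles in and wastes no step; in particular a robot vacates $s$ the time step after it emerges, so robots emerge at $s$ at times $c, c+2, c+4, \dots$ for a fixed constant $c$, and the robot emerging $i$-th settles at its destination $v_i$ at time $c+2(i-1)+dist(s,v_i)$.

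The feature of $G(k)$ to exploit is that it is \emph{not} simply connected: column $1$, the top row, column $k$, and the bottom row enclose a hole. Write $x_k = 2r(k-1)$ for the horizontal position of column $k$, and consider the three cells $c^\ast = (x_k, 30r^2+1)$ (the cell shared by the top of column $k$ and the right end of the top row), $u^\ast = (x_k, 30r^2)$ (just below it, inside column $k$), and $w^\ast = (x_k-1, 30r^2+1)$ (just to its left, inside the top row). A direct check shows that $u^\ast$ has a unique shortest path from $s$, namely along the bottom row to $x_k$ and then straight up column $k$, of length $L := 30r^2 + 2r(k-1)$; that $w^\ast$ has a unique shortest path, namely up column $1$ and then along the top row, also of length $L$; and that $c^\ast$ has exactly two shortest paths, one along each of these routes, each of length $L+1$. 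So in any shortest-path dispersal $u^\ast$ is reached by a robot climbing column $k$, $w^\ast$ by a robot coming along the top row, and $c^\ast$ by one of these two robots; and the robot that settles at $c^\ast$ must \emph{decide to stop there} rather than push on into the still-unexplored other branch (down column $k$, or leftwards along the top row).

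This is where bounded visibility bites. Since robots see only to Manhattan distance $r$, the columns have height $30r^2 \gg r$, and consecutive columns lie $2r$ apart, the executions of \textbf{ALG} on two environments $G(k)$ and $G(k')$ agree step for step --- positions and memory states --- until some robot first comes within distance $r$ of a cell present in one environment and absent in the other, and such cells occur only near the column tops. Hence the way robots fan out along the bottom row and climb the columns below height $30r^2-r$ is independent of $k$. Now, when the robot destined for $c^\ast$ first sees the other branch, the local picture in front of it --- a corridor it arrived along, a branch leading off, walls on the remaining sides, and the visible part of that branch still empty --- is, given its bounded view and $O(1)$ bits of memory, indistinguishable from the picture presented by a deep dead-end pocket it is itself responsible for filling. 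Because \textbf{ALG} is a valid dispersal rule (it must fill dead-end pockets of every depth, in particular depth exceeding $r$), it must, for such a local picture, sometimes send the robot into the branch. But at $c^\ast$ in $G(k)$ that branch is already being filled from its far side: a robot that turns into it then travels strictly more than $dist(s,\cdot)$ to whatever cell it finally occupies, contradicting shortest-path optimality; while if \textbf{ALG} never sends a robot in for this local picture, it fails to fill the corresponding genuine pocket in some other environment. Either horn contradicts our assumptions.

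The step I expect to be the main obstacle is making the junction genuinely forced: one must ensure that at the critical moment the branch really is \emph{empty} in the robot's field of view, not already plugged by a settled robot that arrived from the far side, in which case the robot could simply settle legitimately. This is a scheduling calculation, and it is why the family has $\Theta(r)$ members. Using the $k$-independence of the bottom-up part of the execution one expresses, as functions of $k$, the time at which the first robot climbing column $k$ reaches $u^\ast$ (or $c^\ast$) and the time at which the robots travelling along the top row from column $1$ reach $w^\ast$ (or $c^\ast$); the two inequalities needed for \textbf{ALG} to stop \emph{both} of these robots legitimately turn out to be incompatible --- one would require the $c^\ast$-robot to have emerged before the other, the other requires the opposite --- so for at least one admissible $k$ some robot is forced to overshoot. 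The generous constants in the construction ($10r$ columns, heights $30r^2$, spacing $2r$, bottom row length $20r^2$) are precisely what supply the slack to carry this scheduling argument through.
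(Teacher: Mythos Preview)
Your approach diverges from the paper's in a way that makes the argument substantially harder, and as written it has a real gap.

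The paper does not analyse the junction cell $c^\ast$ and ask ``which robot fills it and can that robot decide correctly?''. Instead it tracks the \emph{first} robot $A_1$ from the moment it emerges. Because $A_1$ sees nothing distinguishing at $s$, it steps (say) up; optimality then forces every subsequent step of $A_1$ to be up or right; and a blocking argument shows that $A_1$'s only admissible settling points are $c^\ast$ or the cell just to its left---settling anywhere else on column $1$ or the top row would sever the short route and force later robots to overshoot. Crucially, $A_1$'s entire trajectory up to the top row is independent of $k$, so one may choose $k$ \emph{adversarially} (by pigeonhole among the $10r$ columns) to be a column no early robot has entered. Then when $A_1$ arrives at its forced destination, there is a stretch of column $k$ no robot has ever seen; deleting a cell from that stretch yields an environment indistinguishable to every robot so far, in which $A_1$'s settling would block off cells. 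That is the contradiction.

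Your proposal instead tries to argue about ``the robot that fills $c^\ast$'', whoever it turns out to be, and to show it cannot decide whether to stop. Two problems. First, you invoke ``$O(1)$ bits of memory'' to claim the local picture at $c^\ast$ is indistinguishable from a dead-end pocket---but the proposition makes no finiteness assumption on memory (the paper even remarks that the result holds with infinite memory and global communication). The only legitimate indistinguishability is between \emph{environments} that no robot has yet observed to differ, and you have not constructed such a pair with matching execution histories up to the critical step. Second, the ``scheduling calculation'' you flag as the main obstacle is genuinely the main obstacle for your route, and you have not carried it out; it is not obvious that the two arrival-time inequalities you allude to are actually incompatible for some $k$.

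The paper's device---pin down $A_1$'s destination \emph{by optimality alone}, then choose $k$ afterwards---sidesteps the scheduling problem entirely. That is the idea you are missing.
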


\begin{proof}
\textit{(Sketch)} We consider the actions of rule $\textbf{ALG}$ on the grid environment $G(k)$. We do not specify the value of $k$ yet.

As before, label the robots emerging at $s$ $A_1, A_2, \ldots$ in their order of arrival. Since $A_1$ cannot distinguish between the up and right directions upon arrival at $s$ (any distinct feature of the environment is at distance at least $r+1$ and hence is invisible), we can assume without loss of generality that it steps up (if it steps right, simply rotate and reflect $G(k)$). 

Assume for contradiction that the total travel of \textbf{ALG} is $T = \sum_{v \in G(k)}{dist(s,v)}$. This assumption implies that every robot travels a shortest path to its settlement destination. In particular, $A_1$ must have precisely $dist(A_1, v_1)$ travel, where $v_1$ is the destination at which $A_1$ chooses to settle. 

We note the following facts:

\begin{enumerate}
    \item Once $A_1$ stepped up, it has committed to stepping up and right until reaching $v_1$, as circling in place or going in a third direction increases its travel past $dist(A_1, v_1)$, causing the total travel of \textbf{ALG} to be greater than $T$--a contradiction.
    
    \item $v_1$ cannot be a vertex in the first column or in the top row except the top vertex of column $k$ or one vertex to its left, as should $v_1$ not equal those, settling there would block off the path to the top row going through the first column, and force other robots to travel to the top row through column $k$. This is sub-optimal, and causes the total travel to increase beyond $T$--a contradiction.
    
    \item $v_1$ cannot be any vertex in the $k$th column other than the top of the $k$th column, as this would require $A_1$ to step downwards.
    
\end{enumerate}

(*) From (1)-(3) we conclude that $v_1$ must equal precisely the top vertex of the $k$th column or one vertex to its left.

Up to the time when $A_1$ reaches the top row, none of the ends of the other columns have been seen, so \textbf{ALG} will run the same regardless of the value of $k$. Since total travel is assumed to be optimal, no robot can block $s$ for more than one time step, so by the time $A_1$ reaches the top row, there will have been created at least $4r$ robots. Each of these $4r$ robots must have already entered one of the columns or settled, since they travel optimal paths to their destination, and the total length of the bottom row is $20r^2$, whereas $30r^2$ time must have passed for $A_1$ to reach the top.

As there are $10r$ columns, there must exist a column that none of the robots $A_1, \ldots, A_{4r}$ have entered. Set the value of $k$ to equal this column.

When $A_1$ reaches $v_1$, the above indicates that any other robot currently present in the $k$th column (if there are any) arrived at least $2\cdot 4r$ time steps after $A_1$. Therefore it is at distance at least $8r$ from $A_1$, meaning that there is a space of $6r$ vertices in column $k$ that no robot has seen yet. This indicates that \textbf{ALG} must make the same decision for $A_1$ whether these vertices exist or not. However, if any one of these vertices does not exist, then column $k$ is not connected to the top row, indicating that $A_1$ cannot settle at the top of the $k$th column or to its left, else it will block off part of the environment. We arrived at a contradiction to (*).

We conclude that there is an environment $G(k)$ where the total travel of $\textbf{ALG}$ is greater than the optimum, so $ALG$ is sub-optimal. 

\end{proof}

By adding more columns to the $G(k)$ construction and increasing the height of the columns, we can force $A_1$ to go down more and more steps, causing the difference between the optimal total travel and the total travel of \textbf{ALG} to be arbitrarily large.

Proposition \ref{impossibility} only makes the assumption of limited visibility. It holds even assuming the agents have global communication, infinite memory, and are aware of each others' positions at all times.

We note that we did not exclude the possibility of a local rule that minimizes the maximal individual travel. Furthermore, we did not exclude the possibility of a rule that minimizes total travel when pauses are not counted. 

\section{Simulations, comparisons, and alternative strategies}

\begin{figure}[!ht]
  \centering%
    \includegraphics[width=.49\linewidth]{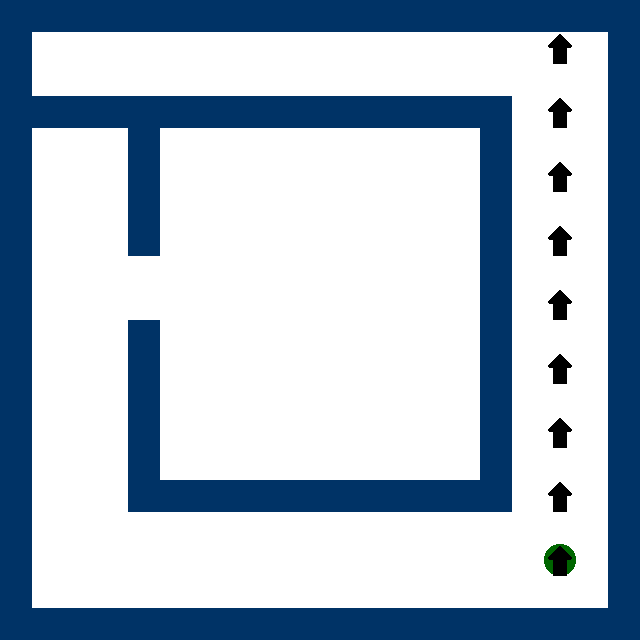}\hfill%
    \includegraphics[width=.49\linewidth]{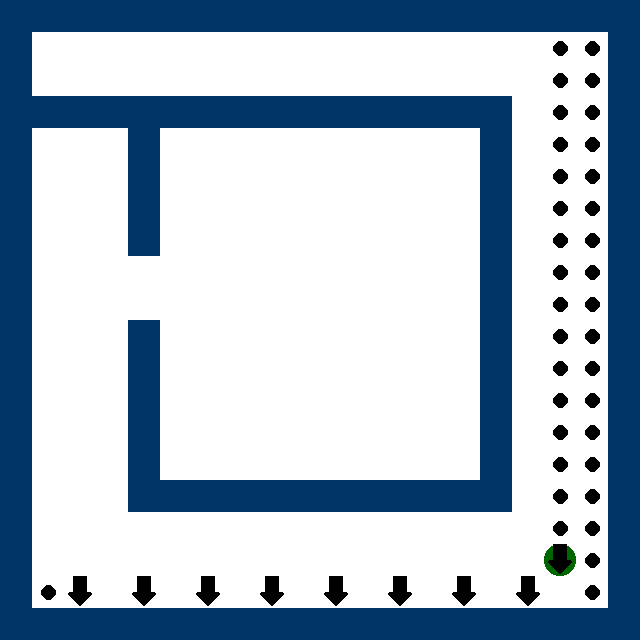}\hfill%
    \includegraphics[width=.49\linewidth]{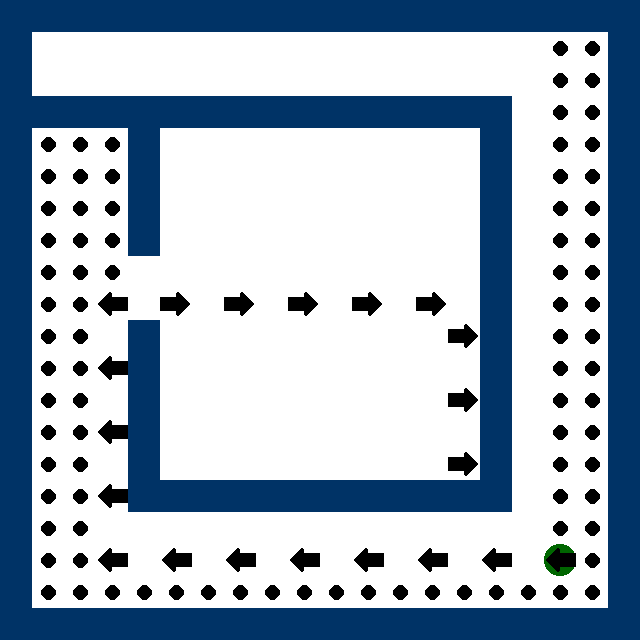}\hfill%
    \includegraphics[width=.49\linewidth]{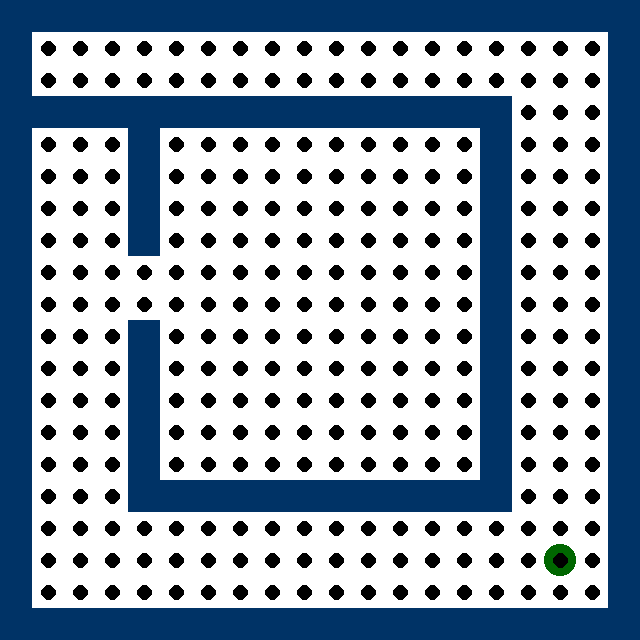}\hfill%
    \caption{A simulation of FCDFS. The blue blocks are walls. The arrows indicate the location and primary direction of the robots, and the diamonds are settled robots. 
    Rather than block active robots, the settled robots form halls to enable the swarm to explore more of the environment.}
    \label{fig:simulation2}
\end{figure}

\begin{figure}[!ht]
  \centering%
    \includegraphics[width=.49\linewidth]{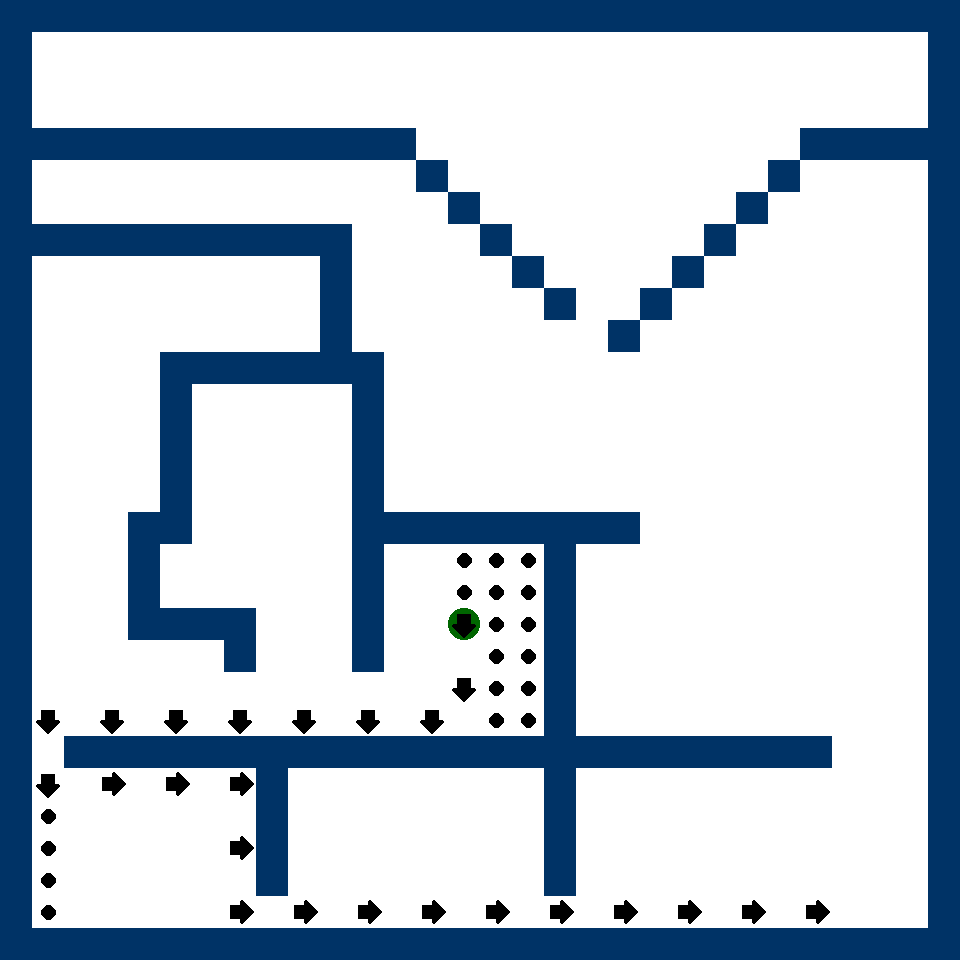}\hfill%
    \includegraphics[width=.49\linewidth]{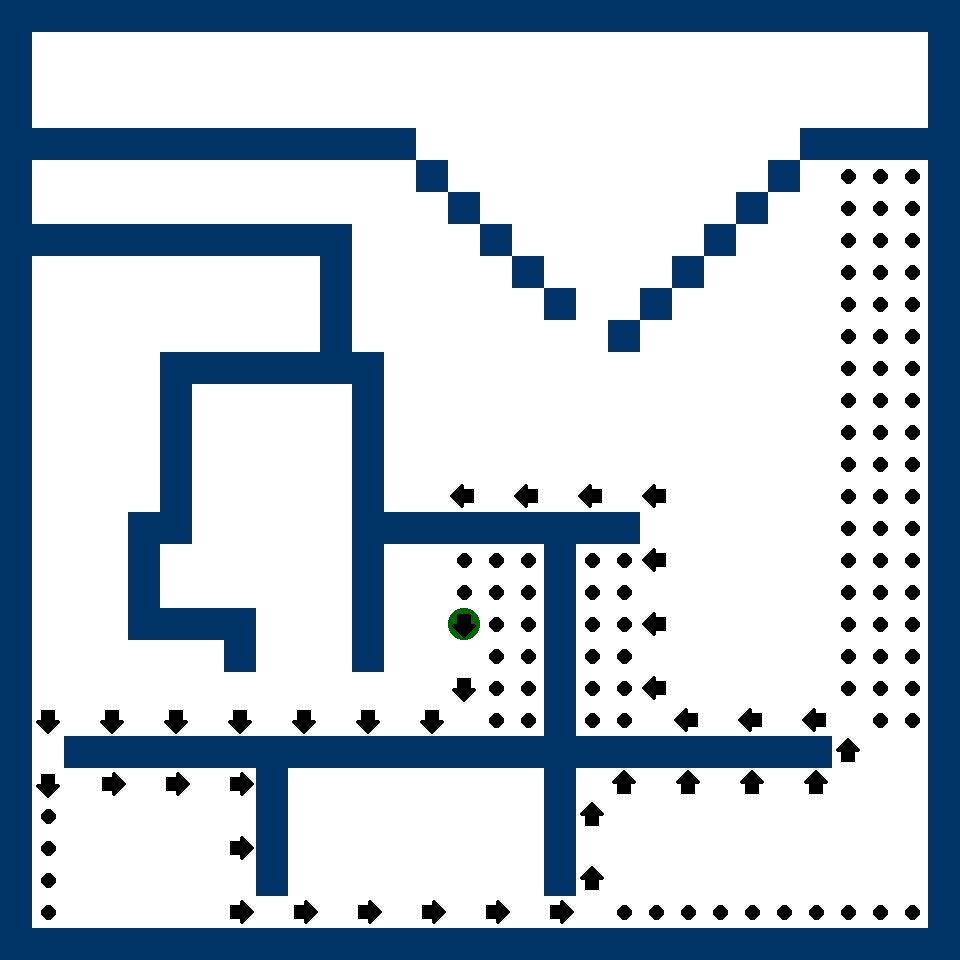}\hfill%
    \includegraphics[width=.49\linewidth]{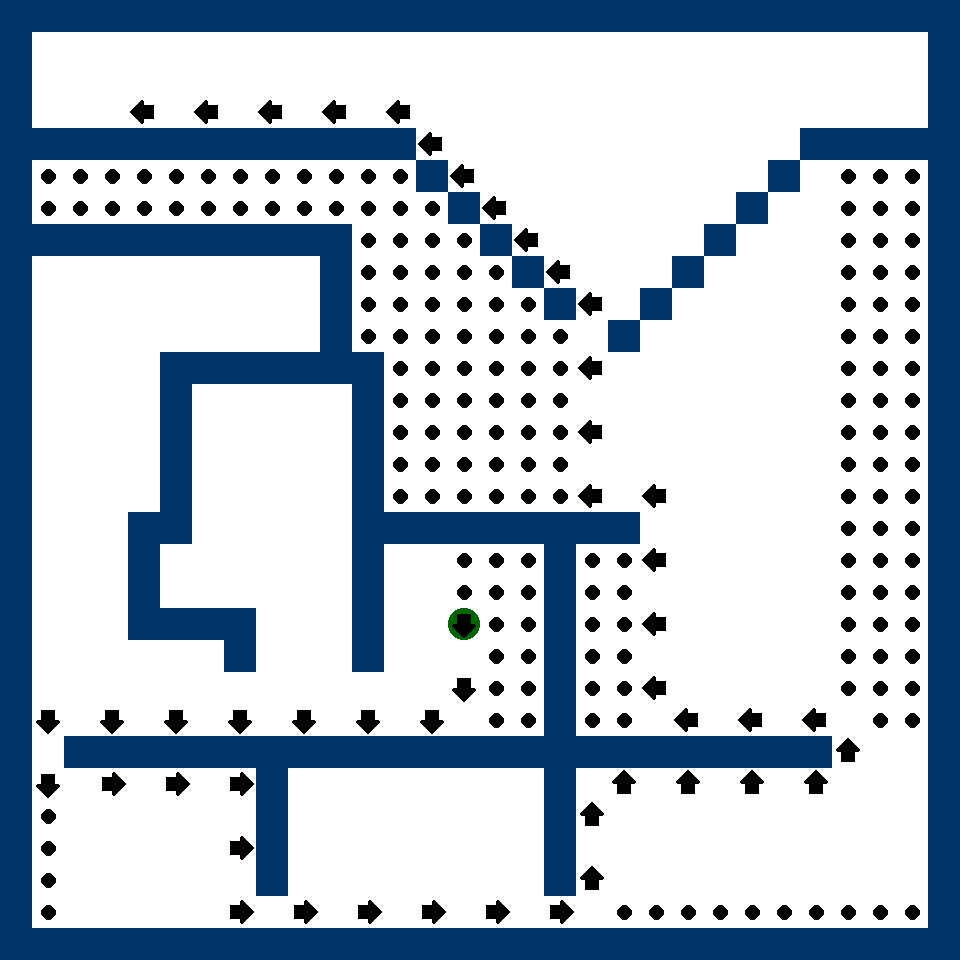}\hfill%
    \includegraphics[width=.49\linewidth]{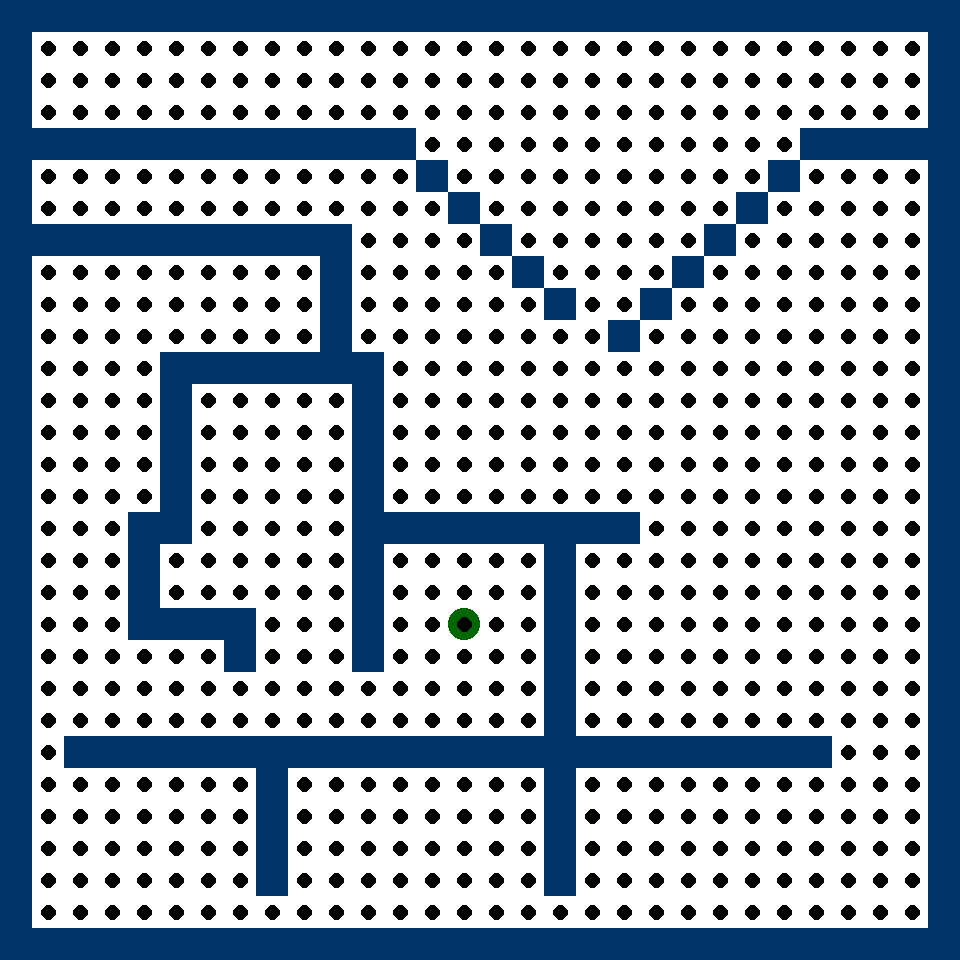}\hfill%
    \caption{A simulation on a different environment. Note how the trail of robots always forms a shortest path to its current front.}
    \label{fig:simulation}
\end{figure}

\begin{figure}[!ht]
    \centering
    \includegraphics[height=1.53in]{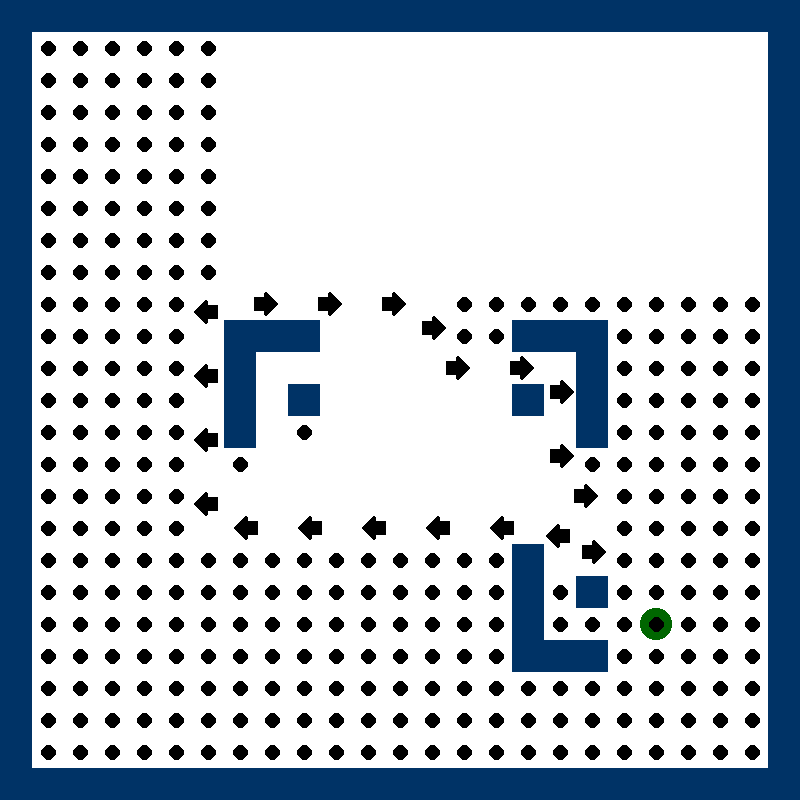}
    \caption{A deadlock scenario in environments that are not simply connected.}
    \label{fig:simulationdeadlock}
\end{figure}

\begin{figure}[htb]
    \includegraphics[height=1.53in]{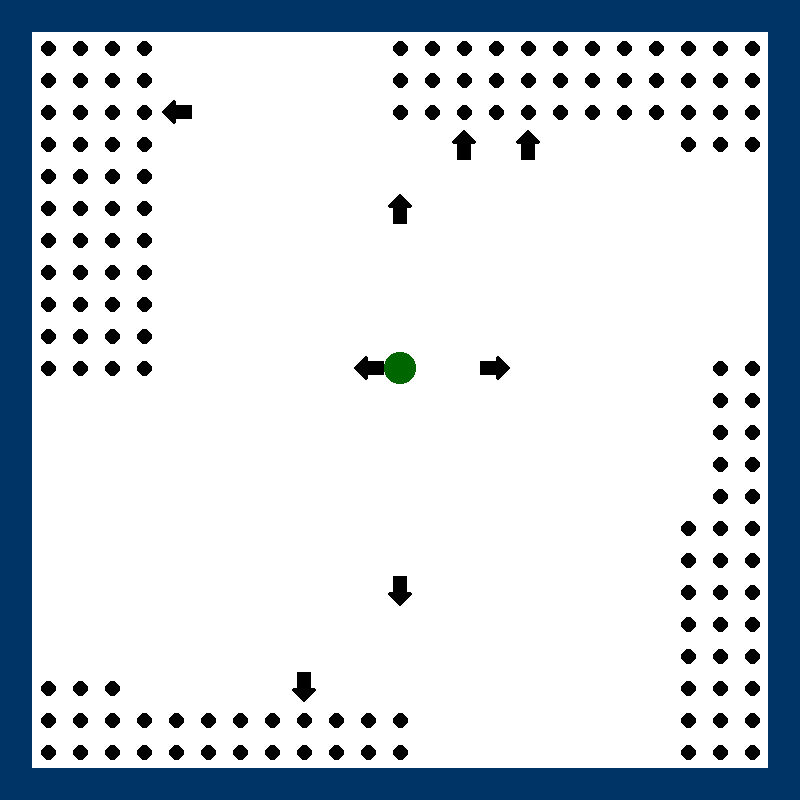}
    \caption{Multi-directional dispersal strategy.}
    \label{fig:alternatestrategies1} 
\end{figure}

\begin{figure}[htb]
    \includegraphics[height=1.53in]{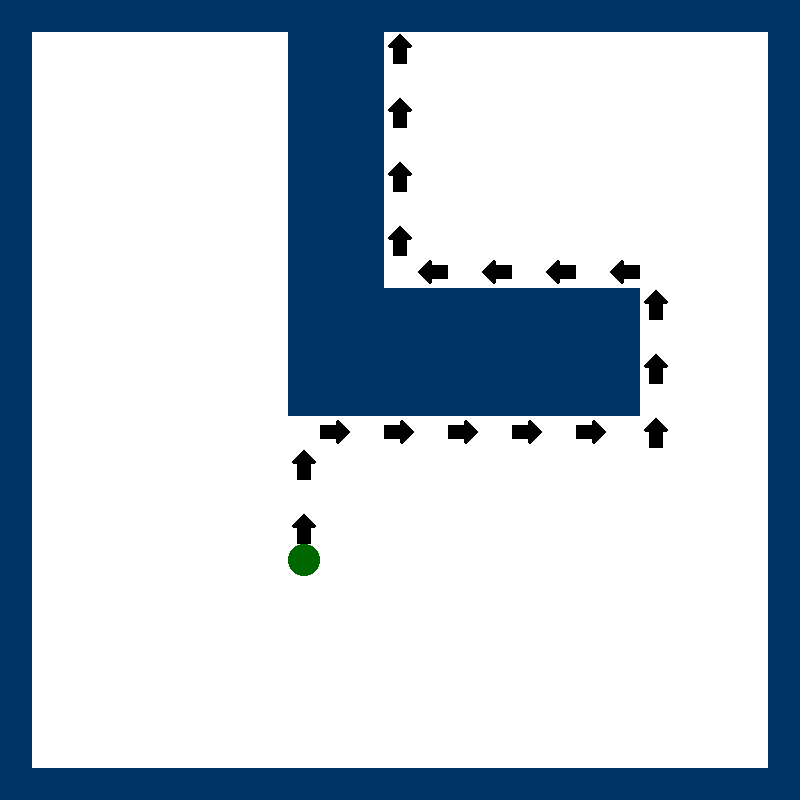}
    \caption{``Left hand on wall'' strategy.}
    \label{fig:alternatestrategies2} 
\end{figure}

We verified and animated our algorithm by simulating it on our robot simulator. Figures \ref{fig:simulation2} and \ref{fig:simulation} show four stills from a run of the algorithm on two different environments. Figure \ref{fig:simulationdeadlock} shows a FCDFS deadlock scenario in an environment that is not simply connected: the halls constantly redirect the robots, forming a cycle. The door vertex has mistakenly blocked itself off, due to the robots exiting from it mistaking the robots in a cycle for obstacles.

We experimented with two variants of FCDFS that are similarly optimal. FCDFS assumes robots are initialized with a common notion of up, down, left and right, but this assumption is unnecessary if we let robots settle in place as soon as they reach a corner (in FCDFS they keep moving if they can). This modified strategy is illustrated in Figure \ref{fig:alternatestrategies1}, where robots randomly choose their initial direction. This creates a more ``symmetric''-looking dispersal. The strategy shown in Figure \ref{fig:alternatestrategies2} is more significantly different: in it, rather than stick to their secondary and primary directions, robots attempt to scale the boundary of the environment with a ``left hand on wall'' clockwise orientation, until they hit a corner or a wall. Both of these variants achieved the same makespan and total travel as FCDFS, though they are visually distinct.

Empirically, we compared the performance of FCDFS to the performance of our implementation of the DFLF and BFLF algorithms of \cite{hsiang} (adapted to our slightly different model) over a number of simply-connected environments, measuring the total travel and maximal individual travel (Table \ref{tablecompare}). Note that though all algorithms are deterministic, some local decisions are not fully specified in \cite{hsiang}, hence different implementations may result in slightly different performance, though asymptotically every implementation will perform the same. We let our robots decide between arbitrary local decisions at random, averaging performance over several re-runs. 

Only for the sake of this comparison, we elected to exclude time steps where robots are active but do not change location, as such intermediate pauses are not counted in \cite{hsiang}. FCDFS is optimal regardless, and factoring these in leaves the DFLF and FCDFS columns unchanged, since such pauses never occur during their execution. However, including pauses causes the \textit{maximal travel} of BFLF to become extremely large. Hence, Table \ref{tablecompare} shows that BFLF is good at reducing the number of location changes of a robot, but in many applications (e.g. when robots are quadcopters) its \textit{energy} consumption is very high compared to FCDFS. 

\begin{table}[!h]
\begin{tabular}{@{}lllll@{} p{10cm}}
\toprule
                       & DFLF & \begin{tabular}[x]{@{}c@{}}BFLF\\(excl. stops)\end{tabular} & FCDFS &  \\ \midrule
30x30 Grid             &   237984 (460)   & 16323 (50)     &   13620 (32)   &  \\
{Fig. 5 Environment} &  16139 (126)    &  6742 (50)    &     5909 (38)   &  \\
{Fig. 6 Environment} &  {100419 (296)}   &   {39576 (112)}   &   {35103 (99)}    &  \\ 
{Fig. 9 Environment}             & {50889  (190)}   &   {7283 (39)}  &   {6600 (35)}    &  \\
\bottomrule
\end{tabular}
\newline
\newline
\caption{A comparison of total travel and maximal individual travel over different environments (excluding pauses). Entries are in the form \textit{total travel (maximal travel)}. See Figures 5, 6, 9 for the specific environments used.}
\label{tablecompare}
\end{table}

\section{Discussion}

A robotic swarm must take into account the  energy capacity of the individual. We discussed the problem of minimizing travel, hence energy expenditure, in the uniform dispersal problem for simply connected grid regions. We showed the existence of a strategy that minimizes total and individual travel for the case of a single source vertex. We showed also a non-existence result for such strategies in the case of general grid environments. 

Several extensions of our work can readily be considered. First, as our algorithm deals only with the single door case, it is desirable to find an energy-efficient dispersal algorithm for the case of multiple doors from which robots arrive independently.

Next, synchronicity is a strong assumption, enabling every robot to proceed to its destination without ever being blocked by another robot. To extend our work to less controlled settings, we may assume an asynchronous time scheme--for example, allow a probability $q$ that an agent fails to activate at a given time step. We cannot expect a makespan- and total travel-\textit{optimal} algorithm to exist in such settings, but we anticipate relatively effective strategies might exist. As a way to proceed, though our algorithm makes the powerful assumption of synchronicity, the strategy of finding corners and not stopping at halls seems general, and could possibly be adopted for the asynchronous case as well.

Finally, our algorithm requires the environments to be simply-connected orthogonal environments: it would be interesting to see an algorithm that works for broader scenarios, or in the opposite direction, results regarding the non-existence of efficient algorithms for such scenarios under the stringent computational assumptions we made.


\newpage 
\bibliographystyle{ACM-Reference-Format}  
\balance  
\bibliography{sample-bibliography}  

\end{document}